\documentclass[authoryear]{elsarticle}
\usepackage{amsthm,amsmath,amssymb}
\usepackage{natbib,url,enumerate}
\usepackage[margin=0pt]{subfig}
\usepackage{tikz}
\usetikzlibrary{positioning,arrows,decorations.pathreplacing}
\usepackage{hyperref}
\usepackage{xcolor}
\hypersetup{
    colorlinks,
    linkcolor={red!50!black},
    citecolor={blue!50!black},
    urlcolor={blue!80!black}
}

\newtheorem{thm}{Theorem}
\newtheorem{prop}[thm]{Proposition} 
\newtheorem{lemma}[thm]{Lemma}

\theoremstyle{definition}
\newtheorem{defi}{Definition}
\theoremstyle{remark}

\newcommand*{\set}[1]{\left\{#1\right\}} 

\newcommand*{\good}{G} 
\newcommand*{\bad}{B} 

\begin{document}
\begin{frontmatter}
\title{Good signals gone bad: dynamic signalling with switching efforts} 
\author{Sander Heinsalu\fnref{thanks}}
\address{Research School of Economics, Australian National University, HW Arndt Building 25a Kingsley St, Acton, ACT 2601, Australia.
\newline Email: sander.heinsalu@anu.edu.au, 
website: \url{http://sanderheinsalu.com/}} 
\fntext[thanks]{
The author is greatly indebted to Johannes H\"{o}rner for many enlightening discussions about this research. Numerous conversations with Larry Samuelson have helped improve the paper. The author is grateful to Eduardo Faingold, Dirk Bergemann, Juuso V\"{a}lim\"{a}ki, Philipp Strack, Willemien Kets, Tadashi Sekiguchi, Flavio Toxvaerd, Fran\c{c}oise Forges, Jack Stecher, Daniel Barron, Benjamin Golub, Florian Ederer, Vijay Krishna, Sambuddha Ghosh, Sergiu Hart, Idione Meneghel, J\"{o}rgen Weibull, Alessandro Bonatti, Zvika Neeman, Scott Kominers, Andrzej Skrzypacz, George Mailath, Joel Sobel, anonymous referees and participants of many conferences and seminars for comments and suggestions. Any remaining errors are the author's. Financial support from Yale University and the Cowles Foundation is gratefully acknowledged.
}

\begin{abstract}
This paper examines signalling when the sender exerts effort and receives benefits over time. Receivers only observe a noisy public signal about the effort, which has no intrinsic value. 

The modelling of signalling in a dynamic context gives rise to novel equilibrium outcomes. In some equilibria, a sender with a higher cost of effort exerts strictly more effort than his low-cost counterpart. 
The low-cost type can compensate later for initial low effort, but this is not worthwhile for a high-cost type. The interpretation of a given signal switches endogenously over time, depending on which type the receivers expect to send it. 

JEL classification: D82, D83, C73.
\end{abstract}
\begin{keyword}
Dynamic games \sep signalling \sep incomplete information
\end{keyword}
\end{frontmatter}

\section{Introduction}

In many signalling situations, the sender exerts effort over time, and the observation of that effort is noisy. For example, a politician may be a (relatively) honest or a corrupt type, and can signal honesty by following the law to the letter (paying taxes in full, refraining from speeding and bribe-taking). Whereas the politician incurs the cost of obeying the law at all points of time, voters learn of low effort only after the realisation of a random event, such as a scandal. The honest type has a lower cost of obeying the law. The honest type acts in the public interest in important matters, but the corrupt type does not. The voters care about decisions in important matters, but not directly about whether the politician obeys the law in everyday life. 

As a consequence of the multiple opportunities of exerting effort, novel dynamics of behaviour arise.
For example, there exist equilibria in which the high-cost type chooses a strictly higher effort level initially than the low-cost type. 

In the model, the players are a sender and a competitive market of receivers. The sender is either a high-cost or a low-cost type. The type is private information. Receivers share a common prior belief about the type. The sender continuously chooses his effort level. Receivers observe noisy public signals about the effort, rather than the effort itself. The signal process is Poisson with intensity decreasing in effort. The types only differ in their flow cost of effort. The sender derives a flow benefit directly from the posterior belief of the receivers. 

Attention is restricted to equilibria that are Markovian (the belief of the receivers is the state variable) and stationary, which means that behaviour does not depend on calendar time. Such equilibria exist, because minimal effort by all types in all circumstances is a Markovian and stationary equilibrium and always exists. 

In some parameter regions, there exist equilibria in which first one type exerts higher effort and then the other. These are called \emph{switched effort equilibria}, because initially the ordering of the efforts of the types is the opposite to that found in the previous literature on signalling. The concept of a switched effort equilibrium can be illustrated by the example of the politician given above. In this example, the politician can be honest or corrupt and can
exert effort to obey the law. Lawful behaviour decreases the frequency of scandals. A switched effort equilibrium can be described in terms of four regimes, which are referred to as \emph{early career}, \emph{insider}, \emph{scrutiny} and \emph{tainted}. Play starts in the early career, during which the corrupt type exerts positive effort and the honest type no effort. If no scandal occurs by a given time, then the politician becomes an insider, which means that the voters ignore scandals and the politician no longer exerts effort. If, instead, a scandal occurs in the early career, then scrutiny results. Under scrutiny, the honest type exerts maximum effort and the corrupt type none. Under scrutiny, a scandal leads to a tainted reputation, which means that voters believe the politician to be corrupt with higher probability than in the other regimes, and the politician exerts no effort. The path of play in switched effort equilibria is depicted in Fig.~\ref{fig:regimes} below. 

\begin{figure}[h]
	\centering
	\begin{tikzpicture}
	\node (init) {Early career};
	\node[right=5em of init] (in) {Insider}
	edge[>=stealth,<-] node[above, font=\small] {no scandal} (init.east);
	\node[below right=5em of init] (scru) {Scrutiny}
	edge[>=stealth,<-]  node[below, rotate=-45,font=\small] {scandal} (init.east)
	edge[>=stealth,->,bend right=90] node[above,font=\small] {no scandal} (scru.north);
	\node[right=5em of scru] (ta) {Tainted}
	edge[>=stealth,<-] node[below,font=\small] {scandal} (scru.east);
	\end{tikzpicture}
	\caption{A switched effort equilibrium.}
	\label{fig:regimes}
\end{figure}
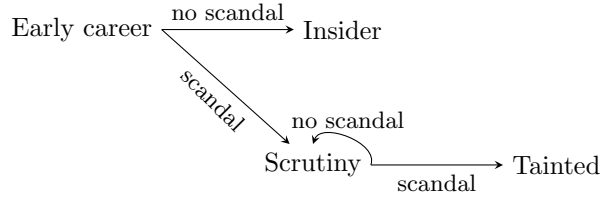

In the early career regime of switched effort equilibria, a signal has the opposite meaning to that in the other regimes.\footnote{One example of a scandal increasing popularity is Bill Clinton, whose approval rating rose from 63\% on 15 Dec 1998 to 73\% on 20 Dec. The House of Representatives passed two articles of impeachment in that period. 
Toronto mayor Rob Ford's approval rating rose from 39\% on 28 Oct 2013 to 44\% on 1 Nov. Between these dates, police chief Bill Blair confirmed a video of Ford smoking crack cocaine.} 
If the low-cost type is more likely to generate a particular signal, then this signal is evidence of low cost, whereas if the high-cost type becomes more likely to send it, then the signal suggests high cost to the receivers. 

The voters do not care directly about the part of the politician's behaviour that leads to scandals (sex life, drug use, tax evasion), but care about the politician's unobservable decisions on issues important to the country. The voters assume that a politician's tendency for scandalous behaviour in matters that have no direct impact on the public is positively correlated with a tendency of corrupt decisions in areas where these cause real damage. These combined tendencies are called a corrupt type. The voters cannot observe the decisions that affect them, so they use scandals to update their belief about the likelihood of honesty in such decisions. 

The politician derives a flow benefit from the voters' belief, which can be justified by a randomly arriving election. This arrival is exogenous to the politician if he is not the head of government and cannot call a snap election. Because of the chance of an election, belief at each future instant has a positive probability of mattering for the politician's career. Alternatively, the politician may derive ego rents from being popular with the public.

\subsection{Literature}

Signalling is used to explain phenomena as diverse as education \citep{spence1973}, conspicuous consumption \citep{veblen1899} and issuing equity \citep{leland+pyle1977}. Many authors have mentioned the relevance of time \citep{weiss1983,admati+perry1987} and noise \citep{matthews+mirman1983} in signalling contexts. The conclusion of the previous literature is that in all equilibria, the high-cost type exerts weakly less effort than the low-cost. 
This result fails to hold only in models that depart from pure Spence signalling, e.g.\ by adding exogenous information revelation and more than two types as in \cite{feltovich+2002}. Introducing noise \citep{matthews+mirman1983} or repetition \citep{noldeke+damme1990,swinkels1999} has not been shown to switch the efforts of the types in pure signalling. Similarly, such switching of efforts has not been found to arise in models that incorporate both dynamics and imperfect monitoring \citep{daley+green2012b,gryglewicz2009,dilme2014}. 

In \cite{dilme2014}, signalling in continuous time is obscured by Brownian noise. The sender (an entrepreneur) decides how much costly effort to exert over time, as well as when to stop the game (sell the firm) and receive a final benefit. This contrasts with the present paper, in which benefit accumulates continuously and the sender cannot stop the receivers from learning. \cite{dilme2014} studies the efficiency of equilibrium effort provision and finds that effort is too low and stops too early. The present paper examines the endogenous switching of the efforts of the types, which is not alluded to in the previous literature, including \cite{dilme2014}. 

In \cite{daley+green2012b}, the uninformed traders receive information (observations of a diffusion process) exogenously over time, and the informed trader decides when to stop the game (execute the trade) and receive a final payoff. \cite{gryglewicz2009} examines limit pricing over time. The low-cost incumbent is a commitment type and the high-cost incumbent decides when to stop imitating the low-cost type. Unlike \cite{daley+green2012b}, the present paper models an endogenous signal. Moreover, the present paper differs from Gryglewicz (2009) in that both types are strategic. The situation described is one in which payoffs accrue continuously or a lump-sum arrives at an exogenous exponentially distributed time. In contrast, the other papers model a stopping decision by the sender that triggers a payoff. Neither \cite{daley+green2012b} nor \cite{gryglewicz2009} refers to switched efforts or interpretations of a signal. 

The present paper is also related---albeit less closely---to the literature on repeated
noiseless signalling \citep{noldeke+damme1990,swinkels1999}. In these models, time is discrete, there is no noise, and the sender pays the signalling cost first and receives the benefit only upon deciding to stop signalling forever. For example, the completion of a traditional education can be modelled in this way---the salary is received only after graduating. 
In the current paper, the benefit is received concurrently with the payment of the cost, as when a worker takes continuing education courses while being employed, or a firm advertises while selling its product. 
\cite{noldeke+damme1990} find a unique informative equilibrium. Using different informational assumptions, \cite{swinkels1999} finds a unique pooling equilibrium. The models in the current paper have many informative equilibria and one pooling equilibrium. In \cite{noldeke+damme1990} and \cite{swinkels1999}, the low-cost type always exerts weakly more effort than the high-cost. While this feature of their models is in line with previous signalling research, it stands in contrast with the current work. 

Discrete-time repeated signalling is also studied in \cite{kaya2009} and \cite{roddie2012b}. In their models, there is no noise in the observation of the sender's action, whereas in the present paper this observation is noisy. \cite{kaya2009} focuses on least-cost separating equilibria. \cite{roddie2012b} provides general conditions for reputation effects to arise. 
Neither paper mentions the possibility of switched efforts, which are the focus of the current paper. 

It is, in fact, easier to construct switched effort equilibria without noise, on account of the availability of belief threats. The author is not aware, however, of any papers in which this has been done. Using belief threats or ad hoc refinements invites the suspicion that switched efforts are driven by unrealistic beliefs. This concern is addressed in the current work by using noise, so that Bayes' rule applies after all histories. Similarly, non-Markov strategies can be used to create strange behaviour, but this paper stacks the deck against unusual results by focussing on Markov stationary equilibria. This serves to strengthen the result that effort switching is possible. 

Switched effort equilibria could not be studied with existing noisy signalling models, which use either discrete time or Brownian noise. Discrete time with noise makes equilibrium calculation intractable. Brownian noise is incompatible with switched efforts\footnote{The proof is available upon request. It uses a coupling argument on the continuous belief paths.} and is less tractable than Poisson for the models in this paper. 

Switched efforts are reminiscent of the countersignalling of \cite{feltovich+2002}, but the mechanism driving them is quite different, as is the model. The switched effort equilibria of this paper use dynamics, while \cite{feltovich+2002} present a one shot model. In the current paper, there is no exogenous information revelation and there are only two types, in contrast to \cite{feltovich+2002}. In countersignalling, the lowest-cost type can rely on the exogenous signal to partly distinguish him from the highest-cost. Moreover, he may exert less effort than the medium type in order to differentiate himself from that type. The effort of the lowest-cost type cannot be less than that of the highest-cost type. In the present paper, it is the threat of future information revelation that incentivises the high-cost type to signal. This threat is not as severe for the low-cost type, leading to strictly less effort. 

\cite{cripps+2004} show that, in a wide class of repeated games, if a reputation for behaviour is not an equilibrium of the complete information stage game, then that reputation is temporary, and the type must eventually be learned. In some switched effort equilibria of the present paper, both types face a positive probability of acquiring a `wrong' permanent reputation in the following sense: when signalling and belief updating stop, belief about the good type may be lower than the prior and belief about the bad type higher. In expectation, beliefs move in the direction of the sender's type, but mistakes have positive probability.

\section{Setup}
\label{sec:setup}

The players are a strategic sender and a competitive market. The sender has a type $\theta\in\set{\good,\bad}$, with $\good$ interpreted as the good (low-cost) type and $\bad$ as bad (high-cost). 
The sender knows his type, the market does not. The initial log likelihood ratio $l_0\in\mathbb{R}$ of the types is common knowledge. Throughout this paper, log likelihood ratio $l$ is used instead of belief $\Pr(\good) = \frac{\exp(l)}{1+\exp(l)}$, as this simplifies the formulas in the dynamic models to follow. All results can be restated in terms of beliefs. A generic log likelihood ratio $l$ is an element of $\overline{\mathbb{R}}=\mathbb{R}\cup\set{\infty,-\infty}$. The log likelihood ratio corresponding to $\Pr(\good)=1$ is $l=\infty$ and corresponding to $\Pr(\good)=0$ is $l=-\infty$. 

Time is continuous and the horizon is infinite. The sender chooses signalling effort $e_t\in[0,1]$ at each instant of time $t$. To avoid the technical difficulties of defining behavioural strategies in continuous time, mixing is not allowed. Effort $e$ costs type $\theta$ sender $c_{\theta}(e)$, with $c_{\theta}$ continuously differentiable, strictly increasing, convex and $c_{\theta}(0)=0$.
\emph{Strong single crossing} $c'_{\good}(1)<c'_{\bad}(0)$ is assumed, which implies the single crossing in type and cost that is standard in the signalling literature. 
Strong single crossing means that the marginal cost of any effort level for the good type is lower than the marginal cost of any effort for the bad type. 

Effort benefits the sender via its effect on the signal process, which drives the market's log likelihood ratio process $(l_t)$. The sender is assumed to derive flow benefit $\beta(l)$ directly from the market's log likelihood ratio $l$. 
Unless noted otherwise, the function $\beta$ is assumed to be strictly increasing, bounded and continuously differentiable. Denote the flow benefit from $l=\infty$ (corresponding to $\Pr(\good)=1$) by $\beta_{\max}$ and from $l=-\infty$ by $\beta_{\min}$. 

In order to provide microfoundations that explain why the market rewards the sender's type rather than the effort it expects from the sender, suppose there are two kinds of effort: effort which is noisily observable and that which is unobservable. 
The senders with a low cost of partially observable effort prefer to exert the unobserved effort, but the high-cost senders prefer not to exert it. Only the completely unobservable kind of effort matters to the market. The type is the level of this hidden effort. The partially visible effort is the signalling activity, which the market does not care about directly. 

A \emph{pure Markov stationary strategy} is a measurable function $(e_{\bad},e_{\good}):\overline{\mathbb{R}}\rightarrow [0,1]^2$ that maps the log likelihood ratio to the efforts of the types. The state variable $l$ is the log likelihood ratio of the market. Formally, the value of the state variable at time $t$ is 
the left limit $l_{t-}$ of the log likelihood ratio process $(l_t)$, similarly to \cite{yushkevich1988}. The log likelihood ratio $l_t$ is a function of time. The left limit means that time approaches $t$ from below. This assumption ensures that the signals are not anticipated by the log likelihood ratio of the market. Intuitively, the market does not condition on the signal realisation in the `next instant'. The convention $l_{0-}=l_0$ is used for the initial value of the state variable. 
Henceforth, only pure Markov stationary strategies are considered and the `pure Markov stationary' phrase is omitted. 

The signal process is Poisson with rate $(1-e_t)\lambda+d$ at time $t$.\footnote{A Poisson rate linear in effort is with some loss of generality, but less than at first seems. A change of variables $\hat{e}=f^{-1}(e)$ transforms cost to $c_{\theta}(f(\hat{e}))$ and the signal rate to $(1-f(\hat{e}_t))\lambda+d$, which may be nonlinear. A strictly increasing convex $f$ preserves the convexity and the strong single crossing of cost.} 
The parameter $\lambda\in(0,\infty)$ is interpreted as the informativeness of effort, while $d\geq 0$ is the minimal rate. 
The meaning of signals is endogenous---whether a signal raises or lowers belief about the type depends on the strategy the market expects from the sender. The same signal can thus have opposite meanings at different log likelihood ratios. 
The market observes the signals, but not the sender's effort or type. Since the signal process is public, the updated log likelihood ratio is common knowledge.

The next part of the setup is the Bayesian updating of the market's log likelihood ratio. 
Denote the strategy the market expects the sender to choose by $e^*=(e_{\good}^*,e_{\bad}^*)$. This notation is also used for equilibrium strategies. If a signal occurs at log likelihood ratio $l\in\mathbb{R}$, then
the log likelihood ratio jumps to 
\begin{align}
\label{jumpb}
j(l)=
l+\ln\left(\frac{\lambda(1-e_{\good}^*(l))+d}{\lambda(1-e_{\bad}^*(l))+d}\right).
\end{align}
Call $|j(l)-l|$ the \emph{jump length}. 
If $d=0$, then the fraction in~(\ref{jumpb}) may become $\frac{0}{0}$ for some strategy $e^*$ that the market expects. In that case, updating uses the limit of~(\ref{jumpb}) as $d\searrow0$. This implies that $\frac{0}{0}=1$, and if $l=\pm\infty$, then $l$ stays constant regardless of signals. 
Note that updating is defined for any $e^*$, not just the equilibrium $e^*$ introduced below. The set of strategies available to the sender is independent of $d$. The limit $d\searrow0$ is taken for each $e^*$ separately to define the overall updating rule that maps $e^*$ and the signal history to $l$. 
The use of the $d\searrow0$ limit of Bayes' rule applies Bayes' rule to null events. 

Belief updating based on Bayes' rule is summarised in the following lemma. 
\begin{lemma}
\label{def:poissonloglr}
Let $(e_{\good}^*,e_{\bad}^*)$ be the strategy the market expects. Fix $t\geq0$. Suppose signals have occurred at $\tau_1,\ldots,\tau_n$, with $0\leq \tau_1\leq\ldots\leq \tau_n\leq t$. Then the log likelihood ratio at $t$ is
\begin{align}
\label{poissonbnewsloglr}
l_t=l_0+\lambda\int_0^t\left[e_{\good}^*(l_{s})-e_{\bad}^*(l_{s})\right]ds+\sum_{k=1}^n \left[j(l_{\tau_k})-l_{\tau_k}\right].
\end{align}
\end{lemma}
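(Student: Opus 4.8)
The plan is to obtain (\ref{poissonbnewsloglr}) directly from Bayes' rule applied to the realised signal history. The basic fact is that a log likelihood ratio updates additively: if $L_\theta$ denotes the likelihood of the observed signal path on $[0,t]$ when the true type is $\theta$ and the market expects $(e^*_{\good},e^*_{\bad})$, then $l_t=l_0+\ln(L_{\good}/L_{\bad})$ by Bayes' rule. So everything reduces to computing $\ln(L_{\good}/L_{\bad})$ for a path consisting of signals at $\tau_1,\ldots,\tau_n$ and no signal at any other time in $[0,t]$. Write $r_\theta(l):=\lambda(1-e^*_\theta(l))+d$ for the type-$\theta$ Poisson rate at state $l$.

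First I would treat the case $d>0$, so that $r_{\good},r_{\bad}$ are strictly positive and bounded, the type-conditional path laws are mutually absolutely continuous, and $\ln(L_{\good}/L_{\bad})$ is finite. Under type $\theta$ the signal process is a point process whose intensity at time $s$ equals $r_\theta(l_{s-})$, the rate reacting to the market's current state, which is the left limit by the convention of Section~\ref{sec:setup}. The standard point-process (Girsanov-type) likelihood on $[0,t]$ then gives $L_\theta=\bigl(\prod_{k=1}^n r_\theta(l_{\tau_k-})\bigr)\exp\bigl(-\int_0^t r_\theta(l_s)\,ds\bigr)$, whence
\[
\ln\frac{L_{\good}}{L_{\bad}}=\sum_{k=1}^n\ln\frac{r_{\good}(l_{\tau_k-})}{r_{\bad}(l_{\tau_k-})}-\int_0^t\bigl(r_{\good}(l_s)-r_{\bad}(l_s)\bigr)\,ds.
\]
Since $r_{\good}(l)-r_{\bad}(l)=\lambda\bigl(e^*_{\bad}(l)-e^*_{\good}(l)\bigr)$, the integral contributes $\lambda\int_0^t[e^*_{\good}(l_s)-e^*_{\bad}(l_s)]\,ds$ (writing $l_s$ in place of $l_{s-}$ is harmless, the two agreeing off the null set $\{\tau_1,\ldots,\tau_n\}$), and by (\ref{jumpb}) the $k$-th summand equals $j(l_{\tau_k})-l_{\tau_k}$ once $l_{\tau_k}$ is read as the pre-jump value. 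Adding $l_0$ yields (\ref{poissonbnewsloglr}). A reader wishing to avoid the point-process likelihood can instead discretise each inter-arrival interval into steps of length $\Delta$, apply Bayes' rule to the event ``no signal in this step'' (likelihood $1-r_\theta(l_s)\Delta+o(\Delta)$ under $\theta$), sum the increments $\ln\frac{1-r_{\good}(l_s)\Delta}{1-r_{\bad}(l_s)\Delta}=\lambda[e^*_{\good}(l_s)-e^*_{\bad}(l_s)]\Delta+o(\Delta)$, let $\Delta\to0$, and insert the jump $\ln(r_{\good}/r_{\bad})$ at each $\tau_k$; this rebuilds the same expression.

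To lift the restriction $d>0$, I would pass to the limit $d\searrow0$ as prescribed in the setup. Both the integrand $e^*_{\good}-e^*_{\bad}$ and, wherever it is defined, the jump map $j$ are continuous in $d$; at states where the ratio in (\ref{jumpb}) degenerates to $0/0$ the convention sets it to $1$, so the corresponding summand vanishes and $l$ is left unchanged by such a signal. Hence (\ref{poissonbnewsloglr}) survives the limit, which disposes of the case $d=0$.

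The step I expect to be the genuine obstacle is not the Bayes computation but the self-referential character of (\ref{poissonbnewsloglr}): the rates $r_\theta(l_{s-})$ are evaluated along the very path the formula is meant to pin down. Strictly, (\ref{poissonbnewsloglr}) should be read as a fixed-point characterisation — between signals $l$ solves $\dot l_s=\lambda[e^*_{\good}(l_s)-e^*_{\bad}(l_s)]$ and at each $\tau_k$ it jumps by $j(l_{\tau_k})-l_{\tau_k}$ — so one must argue that the recursion ``run the ODE to the next signal, apply the jump, repeat'' is well posed. The drift is bounded, but since $e^*_{\good},e^*_{\bad}$ are only assumed measurable even the existence of an absolutely continuous solution needs care; for the strategies arising in the equilibria studied later (piecewise constant or Lipschitz on the relevant regions) the ODE is classical, so the path, and with it the updating rule, is unambiguous.
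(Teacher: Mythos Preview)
The paper does not actually prove this lemma: it is stated in Section~\ref{sec:setup} as the Bayesian updating rule and followed only by interpretive remarks, with no derivation in the text or appendix. Your proposal therefore supplies what the paper omits. The point-process likelihood computation is correct --- the drift arises from $r_{\good}(l)-r_{\bad}(l)=\lambda(e^*_{\bad}(l)-e^*_{\good}(l))$ and the jump from $\ln(r_{\good}/r_{\bad})=j(l)-l$ --- and the discretisation sketch is a legitimate alternative. Your remarks on the $d\searrow0$ limit match the convention the paper adopts, and your final paragraph on the fixed-point nature of (\ref{poissonbnewsloglr}) and the well-posedness of the inter-jump ODE is a point the paper never raises; it is a genuine issue for merely measurable strategies, though as you note it is innocuous for the piecewise-constant equilibria actually constructed.
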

Except for jumps, $l$ evolves deterministically given the market expectations $(e_{\bad}^*,e_{\good}^*)$. Given $l$ at the time of a jump, the jump length is deterministic. The log likelihood process depends on the chosen effort only via the (random) timing of the signals $\tau_1,\ldots,\tau_n$. 
Lemma~\ref{def:poissonloglr} applies even if multiple signals occur in the same instant $\tau_i$, but this event has zero probability, because the signal process is Poisson. 

If for some $\hat{l}$, the strategy features $e_{\good}^*(l)>e_{\bad}^*(l)$ for $l\in (\hat{l}-\epsilon,\hat{l})$ and $e_{\good}^*(l)<e_{\bad}^*(l)$ for $l\in (\hat{l},\hat{l}+\epsilon)$, then $\hat{l}$ is a \emph{stasis point}. A stasis point is described in more detail in the appendix. It occurs at a point $\hat{l}$ that has the log likelihood ratio drift towards $\hat{l}$ from immediately above and below $\hat{l}$. The log likelihood ratio does not drift away from a stasis point, but may jump away in either direction. 

Given the strategy $e^*=(e_{\bad}^*,e_{\good}^*)$ that the market expects the sender to choose, the payoff of type $\theta$ from actually choosing the effort function $e_{\theta }(\cdot)$ 
is the expected discounted sum of flow payoffs
\begin{align}
\label{Poissonvalue}
J^{e_{\theta}}_{l_0}(e^*)=\mathbb{E}^{e_{\theta}}\left[\int_0^{\infty}\exp(-rt)\left[\beta(l_t)-c_{\theta}e_{\theta }(l_t)\right]dt\middle|e^*,l_{t=0}=l_0\right],
\end{align}
where the expectation is over the stochastic process $(l_t)_{t\geq0}$, given $e_{\theta}(\cdot)$. Payoffs both on and off the equilibrium path are given by~(\ref{Poissonvalue}), depending on whether or not the $e_{\good},e_{\bad}$ that maximise~(\ref{Poissonvalue}) for each type satisfy $(e_{\good},e_{\bad})=(e^*_{\good},e^*_{\bad})$. 
The discount rate is $r>0$. 

Given a strategy $e^*$ that the market expects from the sender, the supremum of~(\ref{Poissonvalue}) over $e_{\theta}$ is denoted $V_{\theta}(l_0)$. 
If the market expects a Markov stationary strategy, then every time a given $l$ is reached, the continuation value $V_{\theta}(l)$ of type $\theta$ is well defined and independent of the path of $(l_t)$ that led to $l$. The dependence of $V_{\theta}(l)$ on $e^*$ is suppressed in the notation. If $e_{\good}^*(l)=e_{\bad}^*(l)$, then $(l_t)$ stays at $l$ forever and $V_{\theta}(l)=\int_0^{\infty}\exp(-rt)\beta(l)dt=\frac{\beta(l)}{r}$. 
Value $V_{\theta}(l)$ is bounded above by $\frac{\beta_{\max}}{r}$ and below by $\frac{\beta_{\min}}{r}$. 

\begin{defi}
\label{def:Poissonequil}
A \emph{Markov stationary equilibrium} consists of a strategy $e^*=(e_{\good}^*, e_{\bad}^*)$ of the sender and a log likelihood ratio process $(l_t)_{t\geq0}$ s.t.\
\begin{enumerate}
\item given $(l_t)_{t\geq0}$, $e_{\theta}^*$ maximises~(\ref{Poissonvalue}) over $e_{\theta}$,
\item given $e^*$, $(l_t)_{t\geq0}$ is derived from~(\ref{poissonbnewsloglr}). 
\end{enumerate}
\end{defi}
Henceforth `equilibrium' means a pure Markov stationary equilibrium. 

A \emph{pooling} equilibrium is defined by $e_{\bad}^*(l)=e_{\good}^*(l)\;\forall l$. It is clearly Markovian (independent of past play conditional on $l$) and stationary (independent of calendar time). A pooling equilibrium exists for all parameter values, because if $e_{\bad}^*(l)=e_{\good}^*(l)$, then the log likelihood ratio stays constant forever at $l$. If a log likelihood ratio is unresponsive to effort, then there is no benefit to signalling. This implies that there is no incentive to exert effort at $l$. The unique best response is $e_{\bad}(l)=e_{\good}(l)=0$. Existence is thus guaranteed for the equilibrium concept in Def.~\ref{def:Poissonequil}. 


The focus in this paper is on equilibria in which $e_{\bad}^*(l)>e_{\good}^*(l)=0$ for a nonempty open set of $l$. In the introduction, such equilibria were called \emph{switched effort equilibria}. The introduction also described the path of play of such equilibria, which is depicted in Fig.~\ref{fig:regimes}.  

A switched effort equilibrium cannot be unique, because the pooling equilibrium always exists. It will be shown below that if one switched equilibrium exists, then a continuum of such equilibria exist. 
Refinements cannot be used to select an equilibrium, because Bayes' rule applies everywhere, as explained above. 


\section{Switched effort equilibria}

For some parameter values, there exist equilibria in which, for some log likelihood ratios of the market, the $\bad$ type exerts higher effort than $\good$, despite the uniformly higher marginal cost of effort. The result is reminiscent of the countersignalling of \cite{feltovich+2002}, but the mechanism is quite different. In this model, it is the threat of future information revelation that incentivises $\bad$ to signal. This threat is not as severe for $\good$.

The switched effort pattern $e_{\bad}^*(l)>e_{\good}^*(l)=0$ is counterintuitive, because the flow benefit from a higher log likelihood ratio is the same for the types, but $\bad$ has a higher marginal cost of signalling. The strong single crossing makes the result more stark. The switched effort pattern requires $V_{\bad}$ to decrease at some $l$, because if $e_{\bad}^*(l)>e_{\good}^*(l)$, then $j(l)>l$ and $\bad$ is paying a cost to avoid jumps. For this to be optimal, $V_{\bad}(j(l))<V_{\bad}(l)$ is necessary. 
The decrease of $V_{\bad}$ in $l$ means that the discounted payoff of the $\bad$ type decreases in the probability that the receivers assign to him being $\good$. 

A switched effort equilibrium is constructed as follows: first, assume that the appropriate strategy is employed; second, calculate the value functions; finally, check that the strategy is a best response at every $l$. 
A sketch of a switched effort equilibrium is in Fig.~\ref{fig:poissonstrange2}, in which the initial log likelihood ratio $l_0$ is assumed to be in the interval $(\underline{l},l_1)$. To maximise the duration of switched efforts, let $l_0\rightarrow l_1$. 

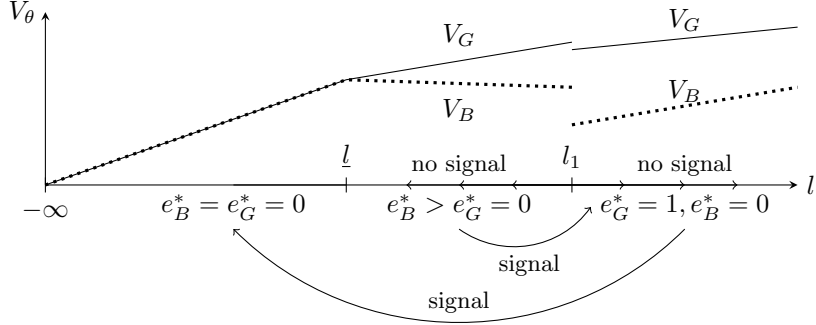
\begin{figure}[h]
	\caption{Value functions and strategy in a switched effort equilibrium. Early career regime: $e_{\bad}^*>e_{\good}^*=0$, $l\in(\underline{l},l_1]$, insider: $e_{\bad}^*=e_{\good}^*=0$, $l=\underline{l}$, scrutiny: $e_{\good}^*=1,e_{\bad}^*=0$, $l> l_1$, tainted: $e_{\bad}^*=e_{\good}^*=0$, $l<\underline{l}$.}
	\label{fig:poissonstrange2}
	\centering
	\begin{tikzpicture}
	\draw[->,>=stealth] (0,0)--(10,0) node (paxis) [right] {$l$};
	\draw[->,>=stealth] (0,0)--(0,2.3) node (vaxis) [left] {$V_{\theta}$};
	\draw (0,0.1)--(0,-0.1) node[below] (zero) {$-\infty$};
	\draw (4,-0.1)--(4,0.1) node[above] {$\underline{l}$};
	\draw (4,0)--(2.5,0) node[below] (elequaleh) {$e_{\bad}^*=e_{\good}^*=0$};
	\draw[->] (7,0)--(5.5,0) node[below] (elgreatereh)  {$e_{\bad}^*>e_{\good}^*=0$}
	node[above,font=\small]  {no signal};
	\draw[->] (7,0)--(6.2,0);
	\draw[->] (7,0)--(4.8,0);
	\draw (7,-0.1)--(7,0.1) node[above] {$l_1$};
	\draw[->] (7,0)--(8.5,0) node[below] (eh1el0) {$e_{\good}^*=1,e_{\bad}^*=0$}
	node[above,font=\small]  {no signal};
	\draw[->] (7,0)--(9.2,0);
	\draw[->] (7,0)--(7.7,0);
	\path[->,bend right=45] (elgreatereh.south) edge node[below,font=\small] {signal} (eh1el0.west);
	\path[->,bend left=45] (eh1el0.south) edge node[above,font=\small] {signal} (elequaleh.south);
	\draw[very thick,dotted] (0,0)--(4,1.4)--(7, 1.3);
	\draw[very thick,dotted] (7, 0.8)--(10, 1.3);
	\node at (5.5,1) {$V_{\bad}$};
	\node at (8.5,1.25) {$V_{\bad}$};
	\draw (0,0)--(4,1.4)--(7, 1.9);
	\draw (7,1.8)--(10, 2.1);
	\node at (5.5,2) {$V_{\good}$};
	\node at (8.5,2.2) {$V_{\good}$};
	\end{tikzpicture}
\end{figure}

Prop.~\ref{prop:poissonstrangeequil2} below provides sufficient conditions for a switched equilibrium to exist. These involve the value functions, which are endogenous, but all conditions can be expressed using primitives, as shown subsequently in Prop.~\ref{prop:poissonstrangeequil3}. 

For those values of $l$ at which $e_{\bad}^*(l)=e_{\good}^*(l)$ (the insider and tainted states in Fig.~\ref{fig:regimes}), incentives are trivial, because $l$ does not respond to signals according to~(\ref{jumpb}). The unique best response, therefore, is $e_{\bad}(l)=e_{\good}(l)=0$. The same reasoning was used above to show the existence of the pooling equilibrium. 

It remains to check the log likelihood ratio regions with $e_{\good}^*(l)=1$, $e_{\bad}^*(l)=0$ (scrutiny in Fig.~\ref{fig:regimes}) and $e_{\bad}^*(l)>e_{\good}^*(l)=0$ (early career in Fig.~\ref{fig:regimes}). There are incentive constraints at each $l$ for both types, but this continuum can be reduced to just five inequalities, depicted in Fig.~\ref{fig:incent}. There is one inequality for each type in the scrutiny region; one inequality for $\good$ in the early career; and
a further two inequalities for $\bad$ in the early career.  

To check that the best responses are $e_{\good}(l)=1,e_{\bad}(l)=0$ when   $e_{\good}^*(l)=1$, $e_{\bad}^*(l)=0$, it is enough to verify two conditions: (i) $\good$ has an incentive to avoid jumps when the avoidance motive (value before minus after jump) is minimal; and (ii) $\bad$ does not have this incentive when the avoidance motive is maximal. The condition for $\good$ to exert maximal effort is Prop.~\ref{prop:poissonstrangeequil2}~(a) below and the condition for $\bad$ to choose zero effort is Prop.~\ref{prop:poissonstrangeequil2}~(b). These are also shown in Fig.~\ref{fig:incent}. 

In the region where $e_{\bad}^*(l)>e_{\good}^*(l)$, it must be checked that $\good$ has no incentive to avoid jumps and $\bad$ has neither too much nor too little incentive, so that the best response of $\bad$ is interior. Define 
\begin{align*}
\overline{V}^{>}_{\theta} &=\sup\set{V_{\theta}(l):e_{\bad}^*(l)>e_{\good}^*(l)},
\\ \underline{V}^{>}_{\theta} &=\inf\set{V_{\theta}(l):e_{\bad}^*(l)>e_{\good}^*(l)},
\\ \overline{V}^1_{\theta} &=\sup\set{V_{\theta}(l):e_{\good}^*(l)=1,e_{\bad}^*(l)=0},
\\ \underline{V}^1_{\theta} &=\inf\set{V_{\theta}(l):e_{\good}^*(l)=1,e_{\bad}^*(l)=0}.
\end{align*}
The greatest temptation for $\good$ to avoid jumps in the early career occurs at the maximal value $\overline{V}^{>}_{\good}$ when jumps go to just above $l_1$ (see Fig.~\ref{fig:incent}). 
If the difference between $\overline{V}^{>}_{\good}$ and $V_{\good}(l_1) =\underline{V}^1_{\good}$ does not incentivise $\good$ to exert effort (condition (c) in Prop.~\ref{prop:poissonstrangeequil2}), then neither do other $V_{\good}(l)$ and $V_{\good}(j(l))$ in these regions. 

The equilibrium conditions that determine the effort of $\bad$ in the early career are $\lambda[V_{\bad}(l)-V_{\bad}(j(l))]=c_{\bad}'(e_{\bad}^*(l))$ and $j(l)=l+\frac{\lambda+d}{\lambda(1-e_{\bad}^*(l))+d}$. The first is the FOC for $\bad$, ensuring that $\bad$ is optimising given the jump length. The second condition determines the jump length given the equilibrium effort $e_{\bad}^*(l)$, using $e_{\good}^*(l)=0$.
A sufficient condition for $\bad$ to choose the equilibrium level of effort when $e_{\bad}^*(l)>e_{\good}^*(l)=0$ is as follows: jumps at $\overline{V}^{>}_{\bad}$ going to $\overline{V}^{1}_{\bad}$ do not incentivise $\bad$ to exert maximal effort (Prop.~\ref{prop:poissonstrangeequil2}~(d)) and jumps at $\underline{V}^{>}_{\bad}$ going to $\underline{V}^{1}_{\bad}$ do motivate a positive level of effort (Prop.~\ref{prop:poissonstrangeequil2}~(e)). Then by the Mean Value Theorem, an interior effort level and jump length can be found such that the jump is derived from the effort using~(\ref{jumpb}) and the effort is a best response to the jump. 

\begin{figure}[h]
	\caption{Incentives in a switched effort equilibrium with $d=0$. Prop.~\ref{prop:poissonstrangeequil2} (a)--(e) are the vertical double-ended arrows.}
	\label{fig:incent}
	\centering
	\begin{tikzpicture}
	\draw[->,>=stealth] (0,0)--(10,0) node (paxis) [right] {$l$};
	\draw[->,>=stealth] (0,0)--(0,2.3) node (vaxis) [left] {$V_{\theta}$};
	\draw (0,0.1)--(0,-0.1) node[below] (zero) {$-\infty$};
	\draw (3,0.1)--(3,-0.1) node[below] {$\underline{l}$};
	\draw (5.5, 0.1)--(5.5, -0.1) node[below] {$l_0$};
	\draw (6,0.1)--(6,-0.1) node[below] {$l_1$};
		\draw (9,0.1)--(9,-0.1) node[below] {$\infty$};
	\draw[very thick,dotted] (0,0)--(3,1.05)--(5.5, 0.7);
	\draw[very thick,dotted] (6, 0.3)--(9, 0.6);
	\node at (4.5,0.6) {$V_{\bad}$};
	\node at (7.5,0.65) {$V_{\bad}$};
	\draw (0,0)--(3,1.05)--(5.5, 1.8);
	\draw (6, 1.4)--(9, 2);
	\node at (4.5, 1.8) {$V_{\good}$};
	\node at (7.5, 2) {$V_{\good}$};
	\draw[<->,>=stealth,dashed] (9, 0.6)--(9,0);
	\node[right] at (9, 0.3) {(b)};
	\draw[<->,>=stealth,dashed] (6, 1.4)--(6,0);
	\node[right] at (6, 0.7) {(a)};
	\draw[dashed] (5.5, 0.3)--(6, 0.3);
	\draw[<->,>=stealth,dashed] (5.5, 0.7)--(5.5, 0.3);
	\node[left] at (5.5, 0.5) {(e)};
	\draw[dashed] (5.5, 1.8)--(6, 1.8);
	\draw[<->,>=stealth,dashed] (6, 1.8)--(6, 1.4);
	\node[left] at (6.1, 1.6) {(c)};
		\draw[dashed] (3, 1.05)--(9, 1.05);
		\draw[<->,>=stealth,dashed] (9, 1.05)--(9, 0.6);
		\node[right] at (9, 0.8) {(d)};
	\end{tikzpicture}
\end{figure}
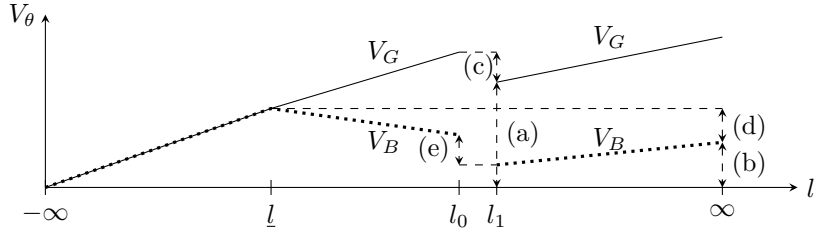

\begin{prop}
	\label{prop:poissonstrangeequil2}
	Fix $\underline{l},l_1\in\mathbb{R}$ with $\underline{l}<l_1$ and fix $\overline{l}\in(l_1,\infty]$. If 
	\begin{enumerate}[(a)]
		\item $\inf\set{V_{\good}(l) -\frac{\beta(j(l))}{r}:l_1\leq l<\overline{l}}\geq \frac{c'_{\good}(1)}{\lambda}$, 
		\item $\sup\set{V_{\bad}(l) -\frac{\beta(j(l))}{r}:l_1\leq l<\overline{l}} \leq \frac{c'_{\bad}(0)}{\lambda}$, 
		\item $\overline{V}^{>}_{\good} -\underline{V}^{1}_{\good}\leq \frac{c'_{\good}(0)}{\lambda}$,
		\item $\overline{V}^{>}_{\bad}-\overline{V}^{1}_{\bad}< \frac{c'_{\bad}(1)}{\lambda}$,
		\item $\underline{V}^{>}_{\bad}-\underline{V}^{1}_{\bad}\geq \frac{c'_{\bad}\left(1-d/\lambda-(\lambda+d)\exp(\underline{l}-l_1 )/\lambda\right)}{\lambda}$,
		\item $\lim_{l\rightarrow \overline{l}}j(l)\leq \underline{l}$,
	\end{enumerate}
	then there exists an equilibrium in which 
	\begin{itemize}
		\item[] $e_{\bad}^*(l)>e_{\good}^*(l)=0$ if $l\in(\underline{l},l_1]$,
		\item[] $e_{\bad}^*(l)=0$, $e_{\good}^*(l)=1$ if $l\in(l_1,\overline{l})$,
		\item[] $e_{\bad}^*(l)=e_{\good}^*(l)=0$ if $l\notin(\underline{l},\overline{l})$.
	\end{itemize}
\end{prop}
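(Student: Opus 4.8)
The plan is to follow the three-step recipe the paper announces: (1) posit the candidate strategy $e^*$ of the displayed form, (2) solve for the induced value functions $V_{\good},V_{\bad}$ on each of the four regimes, and (3) verify that $e^*$ is a best response at every $l$, reducing the continuum of incentive constraints to the five inequalities (a)--(e) plus the geometric condition (f). First I would dispose of the easy regimes: where $e_{\bad}^*(l)=e_{\good}^*(l)=0$ (insider $l=\underline{l}$ and tainted $l<\underline{l}$), the jump map~\eqref{jumpb} leaves $l$ fixed, so $V_{\theta}(l)=\beta(l)/r$ and the unique best response is zero effort — exactly the argument already used for the pooling equilibrium. Condition (f), $\lim_{l\to\overline{l}}j(l)\le\underline{l}$, is what guarantees that a signal from the scrutiny region lands in the tainted region (or at the insider point), so the dynamics close up consistently with Fig.~\ref{fig:regimes}.

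Next I would pin down the value functions on the two active regimes. On the scrutiny region $(l_1,\overline{l})$ with $e_{\good}^*=1,e_{\bad}^*=0$, the state drifts down (since $e_{\good}^*-e_{\bad}^*=-1<0$ in~\eqref{poissonbnewsloglr}) at unit speed, and a signal arrives at rate $d$ for $\good$ and $\lambda+d$ for $\bad$, jumping $l$ to $j(l)$. This gives ODEs $rV_{\good}(l)=\beta(l)-c_{\good}(1)-V_{\good}'(l)+d[\beta(j(l))/r - V_{\good}(l)]$ and $rV_{\bad}(l)=\beta(l)-V_{\bad}'(l)+(\lambda+d)[\beta(j(l))/r-V_{\bad}(l)]$, using that the post-jump continuation is $\beta(j(l))/r$ because $j(l)\le\underline{l}$ lies in a zero-effort region. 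On the early-career region $(\underline{l},l_1]$ with $e_{\good}^*=0$ and $e_{\bad}^*(l)\in(0,1)$ determined jointly with the jump length by the two equilibrium conditions $\lambda[V_{\bad}(l)-V_{\bad}(j(l))]=c_{\bad}'(e_{\bad}^*(l))$ and $j(l)=l+\ln\frac{\lambda+d}{\lambda(1-e_{\bad}^*(l))+d}$, the state drifts down (since $e_{\good}^*-e_{\bad}^*<0$) and $\good$'s signal rate is $\lambda+d$. One must check that this coupled system has a solution with $e_{\bad}^*\in(0,1)$ and $j(l)$ landing in $(l_1,\overline{l})$; here I would invoke the Mean Value Theorem exactly as the text indicates, using (d) and (e) as the two-sided bracket on the avoidance motive $\overline{V}^{>}_{\bad}-\overline{V}^{1}_{\bad}$ versus $\underline{V}^{>}_{\bad}-\underline{V}^{1}_{\bad}$ to produce an interior $e_{\bad}^*$, with the specific argument $1-d/\lambda-(\lambda+d)\exp(\underline{l}-l_1)/\lambda$ in (e) being precisely the effort level for which a jump of length $l_1-\underline{l}$ (the maximal relevant jump) is consistent with~\eqref{jumpb}.

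Then I would verify the incentive constraints regime by regime. In scrutiny: $\good$'s FOC for choosing $e_{\good}=1$ at a boundary is $\lambda[V_{\good}(l)-\beta(j(l))/r]\ge c_{\good}'(1)$, and since the left side is minimized over the regime at the infimum, (a) is exactly sufficient; symmetrically, $\bad$ prefers $e_{\bad}=0$ iff $\lambda[V_{\bad}(l)-\beta(j(l))/r]\le c_{\bad}'(0)$ at the supremum, which is (b). (Monotonicity in effort of the objective, from convexity of $c_{\theta}$, lets a single inequality at the extreme value cover the whole continuum — this is the reduction step.) In early career: $\good$ has $e_{\good}^*=0$, so I must rule out any upward deviation; the most tempting deviation moves value from $\overline{V}^{>}_{\good}$ up to the best reachable point $\underline{V}^1_{\good}=V_{\good}(l_1)$, and the marginal gain $\lambda(\overline{V}^{>}_{\good}-\underline{V}^1_{\good})$ must not exceed $c_{\good}'(0)$, which is (c). For $\bad$ in early career, the FOC already holds by construction of $e_{\bad}^*(l)$; what remains is global optimality, i.e. that $V_{\bad}$ is concave enough along the jump that the first-order condition identifies a maximum rather than a minimum — this follows from $\overline{V}^{>}_{\bad}-\overline{V}^1_{\bad}<c_{\bad}'(1)/\lambda$ in (d) ensuring $\bad$ never wants to jump to $e_{\bad}=1$, combined with (e) ensuring he never wants $e_{\bad}=0$.

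The main obstacle I anticipate is step (2): establishing that the coupled ODE-plus-algebraic system on the early-career region actually admits a well-defined solution $(V_{\bad},V_{\good},e_{\bad}^*,j)$ with all the required range constraints ($e_{\bad}^*\in(0,1)$, $j(l)\in(l_1,\overline{l})$, $V_{\bad}$ decreasing as Fig.~\ref{fig:poissonstrange2} demands), and that the suprema/infima $\overline{V}^{>}_{\theta},\underline{V}^1_{\theta}$ etc.\ appearing in (a)--(e) are attained at the regime boundaries so that checking the five inequalities genuinely covers the continuum. Existence of the value functions as fixed points of the dynamic-programming operator is standard given boundedness of $\beta$ and $r>0$, but the simultaneous determination of the effort profile and jump lengths — because $j(l)$ feeds back into the ODE through $V_{\bad}(j(l))$ — is where the argument needs care; I would handle it by a monotonicity/contraction argument on the map sending a candidate $V_{\bad}|_{(\underline{l},l_1]}$ to the one induced by the effort it prescribes, using strong single crossing to keep $\good$'s effort pinned at the corner and thereby decouple $\good$'s problem from the fixed point.
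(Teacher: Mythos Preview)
Your overall architecture matches the paper's: dispose of the pooling regimes trivially, write down the HJB on the scrutiny region to get $V_\theta$ in closed form, use (a) and (b) via the corner best-response condition~\eqref{cornerBR2} for scrutiny, use (c) for $\good$ in early career, and use (d)--(e) with the Mean Value Theorem to produce an interior $e_{\bad}^*(l)$ on early career. That is exactly the route the paper takes.

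However, there is a concrete error in your treatment of the scrutiny region. You write that the state ``drifts down (since $e_{\good}^*-e_{\bad}^*=-1<0$)'' and accordingly put $-V_{\good}'(l)$ in the ODE. But with $e_{\good}^*=1$, $e_{\bad}^*=0$ the drift in~\eqref{poissonbnewsloglr} is $\lambda[e_{\good}^*-e_{\bad}^*]=+\lambda$, so $l$ drifts \emph{up} (cf.\ the right-pointing ``no signal'' arrows in Fig.~\ref{fig:incent}). The correct HJB on $(l_1,\overline{l})$ is $rV_{\good}(l)=\beta(l)-c_{\good}(1)+\lambda V_{\good}'(l)+d[\beta(j(l))/r-V_{\good}(l)]$ and similarly for $\bad$ with jump rate $\lambda+d$; this yields the closed forms~\eqref{bnewsodesols2}, integrated from $l$ up to $\overline{l}$ with boundary data at $\overline{l}$. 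Your sign (and magnitude) on the drift term would give the wrong solutions and the wrong monotonicity of $V_\theta$ on scrutiny, which matters because you later rely on $\underline{V}^1_\theta=V_\theta(l_1)$ and $\overline{V}^1_\theta=\lim_{l\to\overline{l}}V_\theta(l)$.

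Your final paragraph also over-engineers the early-career step. The paper does not need a contraction or monotone-operator argument on $V_{\bad}|_{(\underline{l},l_1]}$. The point is that the equilibrium is \emph{constructed} so that jumps from early career land in $(l_1,\overline{l})$, where $V_\theta$ has already been computed in closed form. The only fixed-point content is the single scalar equation~\eqref{equilcond2} at each $l$: its left side is continuous and strictly decreasing in $e_{\bad}^*(l)$ (through $V_\theta(j(l))$, which is the known scrutiny solution) while the right side is increasing; conditions (d) and (e) bracket it, so the Mean Value Theorem delivers the interior $e_{\bad}^*(l)$ directly. The quantities $\overline{V}^{>}_\theta,\underline{V}^{>}_\theta$ appearing in the hypotheses are the actual value-function bounds for the posited strategy, so there is no residual circularity to resolve by iteration.
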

The proof is in the appendix. 
The sufficient conditions in Prop.~\ref{prop:poissonstrangeequil2} hold even with the strong single crossing of cost (the marginal cost of $\bad$ is uniformly higher than that of $\good$), which makes the result more surprising.

When $d=0$ in Prop.~\ref{prop:poissonstrangeequil2}, the upper bound $\overline{l}$ of the $e_{\good}^*(l)=1$, $e_{\bad}^*(l)=0$ region must be $\infty$. Otherwise $V_{\bad}(\overline{l})-\frac{\beta_{\min}}{r}$ is large enough to induce $\bad$ to exert effort at $\overline{l}$. If, in the scrutiny region, effort is required forever, then the value of type $\bad$ is lowered, which helps to restore incentives. 

Conditions (a)--(e) in Prop.~\ref{prop:poissonstrangeequil2} have a bound on the marginal benefit of avoiding a jump on the LHS and a bound on the marginal cost per unit of jump frequency on the RHS. The marginal benefit is the value difference between the log likelihood ratios before and after a jump. The marginal cost $c_{\theta}'$ is evaluated at a bound on the effort. The rate of jumps absent effort is $\lambda$, which is also the reduction in jump rate per unit of effort. The inequality and the bound on type $\theta$'s effort in each condition ensure the required effort level. For example, in (a), the lower bound on the marginal benefit of avoiding a jump is larger than the marginal cost to $\good$ at the maximal effort $1$ in the scrutiny region. 
In (e), the lower bound on $e_{\bad}^*$ is not $0$, because together with $e_{\good}^*=0$, this would make jump length zero. Instead, the lower bound $1-\frac{d}{\lambda}-\frac{\lambda+d}{\lambda}\exp(\underline{l}-l_1)$ on $e_{\bad}^*$ ensures that the jumps from the early career region land in the scrutiny region. The maximal jump length necessary for this is $l_1-\underline{l}$.
Condition (f) in Prop.~\ref{prop:poissonstrangeequil2} ensures that jumps end in a region where the value function can be calculated in closed form, which is technically convenient. 

Prop.~\ref{prop:poissonstrangeequil2} is useful for checking incentives after numerically calculating the candidate value functions. 
In the region of $l$ where $e_{\good}^*(l)<e_{\bad}^*(l)$, the value functions cannot in general be found in closed form. Numerical simulations must be used. 
Fig.~\ref{fig:swieff2} displays a numerical example of a switched effort equilibrium. Belief $\mu=\frac{\exp(l)}{1+\exp(l)}$ is used on the $x$-axis instead of $l$, in order to display $V_{\theta}(l)$ at all $l$. The $y$-axis has log scale. 
\begin{figure}[h]
	\centering
	\caption{A switched effort equilibrium with $\beta(l)=\frac{\exp(l)}{\exp(l)+\exp(4.2)}$, $c_{\good}(e)=\frac{e}{10}$, $c_{\bad}(e)=\frac{200e}{201}$, $d=0$, $r=\frac{1}{100}$, $\lambda=2$. Blue dashed line: $V_{\good}$, purple line: $V_{\bad}$, black line: $e_{\bad}^*$.
		\label{fig:swieff2}
	}
	\includegraphics[width=0.6\textwidth]{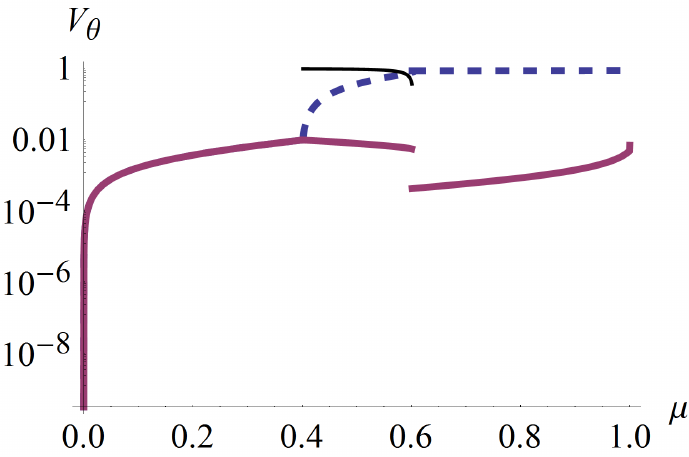}
\end{figure}
The prior probability of $\good$ is $0.6$ at the start of the early career ($l_0=\ln(1.5)$). Insider status is attained when belief $\mu$ has fallen to $0.4$, i.e.\ $\underline{l}=\ln\left(\frac{2}{3}\right)$. 

Relaxing the assumptions on $\beta$ to bounded and weakly increasing, for some $\beta$ the value functions can be found in closed form. Such $\beta$ can be chosen to make the switched effort region unboundedly large (in belief space, from $\epsilon$ to $1-\epsilon$ for any $\epsilon\in(0,\frac{1}{2})$).\footnote{Example available upon request.} 

The next proposition provides sufficient conditions for a switched effort equilibrium in terms of primitives. The value functions are replaced with their closed forms when $e_{\bad}^*,e_{\good}^*\in\set{0,1}$, but bounded by parameters otherwise. The bounding makes the results less tight than in Prop.~\ref{prop:poissonstrangeequil2}.
Parts (a)--(f) in Prop.~\ref{prop:poissonstrangeequil3} and Prop.~\ref{prop:poissonstrangeequil2} correspond, in the sense that each inequality in Prop.~\ref{prop:poissonstrangeequil3} is a bound on the respective inequality in Prop.~\ref{prop:poissonstrangeequil2}. The intuition for Prop.~\ref{prop:poissonstrangeequil3} (and Prop.~\ref{prop:poissonstrangeequil4} to follow) is thus the same as for Prop.~\ref{prop:poissonstrangeequil2}, discussed above and illustrated in Fig.~\ref{fig:incent}. 
\begin{prop}
	\label{prop:poissonstrangeequil3}
	Fix $\underline{l},l_1\in\mathbb{R}$ with $\underline{l}<l_1$. If 
	\begin{enumerate}[(a)]
		\item $\int_{l_1}^{\infty}\frac{\beta(z)-c_{\good}(1)}{\lambda}\exp\left(-r\frac{z-l}{\lambda}\right)dz -\frac{\beta_{\min}}{r} \geq \frac{c'_{\good}(1)}{\lambda}$, 
		\item $\frac{\beta_{\max}}{r+\lambda}+\frac{\lambda\beta_{\min}}{r(r+\lambda)} -\frac{\beta_{\min}}{r} \leq \frac{c'_{\bad}(0)}{\lambda}$, 
		\item $\frac{\beta(\underline{l})}{r} -\int_{l_1}^{\infty}\frac{\beta(z)-c_{\good}(1)}{\lambda}\exp\left(-r\frac{z-l}{\lambda}\right)dz< \frac{c'_{\good}(0)}{\lambda}$,
		\item $\frac{\beta(\underline{l})}{r}-\left[\frac{\beta_{\max}}{r+\lambda}+\frac{\lambda\beta_{\min}}{r(r+\lambda)}\right]< \frac{c'_{\bad}(1)}{\lambda}$,
		\item $\frac{\beta(\underline{l})}{r}-\int_{l_1}^{\infty}\left[\frac{\beta(z)}{\lambda}+\frac{\beta_{\min}}{r}\right]\exp\left(-(r+\lambda)\frac{z-l}{\lambda}\right)dz>  \frac{c_{\bad}'\left(1-\exp(\underline{l}-l_1)\right)}{\lambda} $,
		\item $d=0$,
	\end{enumerate}
	then there exists $l_0\in(\underline{l},l_1)$ and an equilibrium in which 
	\begin{itemize}
		\item[] $e_{\bad}^*(l)>e_{\good}^*(l)=0$ if $l\in(\underline{l},l_0]$,
		\item[] $e_{\bad}^*(l)=0$, $e_{\good}^*(l)=1$ if $l\in[l_1,\infty)$,
		\item[] $e_{\bad}^*(l)=e_{\good}^*(l)=0$ if $l\notin(\underline{l},l_0]\cup[l_1,\infty)$.
	\end{itemize}
\end{prop}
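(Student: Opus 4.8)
The plan is to deduce Prop.~\ref{prop:poissonstrangeequil3} from the construction underlying Prop.~\ref{prop:poissonstrangeequil2}, by writing down the value functions in closed form wherever the efforts are extremal and by controlling them near the insider boundary $\underline l$. Set $\overline l=\infty$; since $d=0$ by Prop.~\ref{prop:poissonstrangeequil3}(f) this is the only admissible choice (as explained after Prop.~\ref{prop:poissonstrangeequil2}), and it makes condition (f) of Prop.~\ref{prop:poissonstrangeequil2} automatic, because $e_{\good}^*=1$ together with $d=0$ forces $j(l)=-\infty$ on the whole scrutiny region. The candidate strategy puts early career on $(\underline l,l_0]$, scrutiny on $[l_1,\infty)$, and the zero-zero strategy on the complement, which includes the \emph{gap} $(l_0,l_1)$; on that gap $l$ is frozen and $e_{\bad}=e_{\good}=0$ is the unique best response, exactly as for the pooling equilibrium, so no constraint there binds. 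Provided the jumps out of $(\underline l,l_0]$ land in $[l_1,\infty)$ --- which holds once $e_{\bad}^*(l)\ge 1-\exp(\underline l-l_1)$ throughout, the worst case being a jump of length $l_1-\underline l$ starting from $\underline l$ --- the only substantive checks are the five incentive inequalities of Prop.~\ref{prop:poissonstrangeequil2} on the early-career and scrutiny regions.

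On the scrutiny region, where $(e_{\good}^*,e_{\bad}^*)=(1,0)$ and $d=0$, the state drifts up at rate $\lambda$. Type $\good$, playing $e_{\good}=1$, never triggers a signal, so $l_t=l+\lambda t$ and $V_{\good}(l)=\frac1\lambda\int_l^\infty(\beta(z)-c_{\good}(1))\exp(-r(z-l)/\lambda)\,dz$; type $\bad$, playing $e_{\bad}=0$, triggers signals at rate $\lambda$, each sending $l$ to $-\infty$ where the continuation value is $\beta_{\min}/r$, so $V_{\bad}(l)=\frac1\lambda\int_l^\infty\beta(z)\exp(-(r+\lambda)(z-l)/\lambda)\,dz+\frac{\lambda\beta_{\min}}{r(r+\lambda)}$. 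Both are increasing in $l$, hence $\underline V^1_{\good}=V_{\good}(l_1)$, $\underline V^1_{\bad}=V_{\bad}(l_1)$, $\overline V^1_{\bad}=\lim_{l\to\infty}V_{\bad}(l)=\frac{\beta_{\max}}{r+\lambda}+\frac{\lambda\beta_{\min}}{r(r+\lambda)}$, and $\beta(j(l))/r=\beta_{\min}/r$ throughout. Substituting these identities (the integrals in Prop.~\ref{prop:poissonstrangeequil3} being read at the binding point $l=l_1$), condition (a) of Prop.~\ref{prop:poissonstrangeequil2} becomes exactly condition (a) of Prop.~\ref{prop:poissonstrangeequil3}, and likewise for (b). The insider value at $\underline l$ is $\beta(\underline l)/r$.

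It remains to handle the early-career region $(\underline l,l_0]$, where $e_{\good}^*=0$ and $e_{\bad}^*(l)\in(0,1)$ has no closed form. I would construct $e_{\bad}^*$ exactly as in the proof of Prop.~\ref{prop:poissonstrangeequil2}: for each $l$ the Mean Value Theorem produces an interior effort and matching jump length solving simultaneously the first-order condition $\lambda[V_{\bad}(l)-V_{\bad}(j(l))]=c'_{\bad}(e_{\bad}^*(l))$ and the jump relation $j(l)-l=-\ln(1-e_{\bad}^*(l))$, with $j(l)\ge l_1$. Conditions (c), (d), (e) of Prop.~\ref{prop:poissonstrangeequil3} are the primitive counterparts of (c), (d), (e) of Prop.~\ref{prop:poissonstrangeequil2} once every occurrence of $\overline V^{>}_{\theta}$ and $\underline V^{>}_{\theta}$ is replaced by $\beta(\underline l)/r$. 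The justification for this replacement, and the source of the free parameter $l_0$, is continuity of the value functions at $\underline l$, where $V_{\theta}(\underline l)=\beta(\underline l)/r$: as $l_0\downarrow\underline l$ the time spent in $(\underline l,l_0]$ before drifting to $\underline l$ shrinks (the drift speed is bounded below by $\lambda(1-\exp(\underline l-l_1))$), so the probability of jumping to scrutiny before reaching $\underline l$ tends to $0$, and hence $\overline V^{>}_{\theta},\underline V^{>}_{\theta}\to\beta(\underline l)/r$ uniformly for $\theta\in\set{\good,\bad}$. Because (c), (d), (e) of Prop.~\ref{prop:poissonstrangeequil3} hold with strict inequality (or with the slack $c'_{\good}(0)>0$ in (c)), there is room to pick $l_0\in(\underline l,l_1)$ close enough to $\underline l$ that (c), (d), (e) of Prop.~\ref{prop:poissonstrangeequil2} hold; that $l_0$ is the one the proposition asserts. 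Prop.~\ref{prop:poissonstrangeequil2} then delivers the equilibrium.

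The main obstacle is the early-career region, and two points there need care. First, $\bad$'s maximization is not concave in general, so exhibiting a root of the first-order condition is not enough: one must verify that it dominates the corner choices $e_{\bad}=0$ and $e_{\bad}=1$, which is precisely what the upper bound in (d) and the lower bound in (e) are designed to guarantee, so the argument must track those corners explicitly. Second, the claim that $\overline V^{>}_{\theta},\underline V^{>}_{\theta}\to\beta(\underline l)/r$ as $l_0\downarrow\underline l$ needs a genuine uniform estimate on the value functions that solve the coupled ODE--algebraic system on $(\underline l,l_0]$ without a closed form; I would obtain it from the flow bounds $\beta_{\min}\le\beta\le\beta_{\max}$, the lower bound on $e_{\bad}^*$, and the vanishing of the jump probability noted above. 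Everything else --- the scrutiny closed forms, the harmlessness of the gap, and the matching of (a) and (b) --- is bookkeeping.
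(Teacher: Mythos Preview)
Your proposal is correct and follows essentially the same route as the paper's proof: set $\overline l=\infty$, read off the closed-form scrutiny values from the ODE solutions to match (a) and (b), then use continuity of $V_{\theta}$ at $\underline l$ (obtained from flow bounds and the uniform lower bound on $e_{\bad}^*$, exactly the paper's Squeeze-Theorem step) to shrink $l_0$ toward $\underline l$ until the strict inequalities (c)--(e) of Prop.~\ref{prop:poissonstrangeequil3} deliver the corresponding conditions of Prop.~\ref{prop:poissonstrangeequil2}. The only cosmetic difference is that the paper re-verifies the incentive inequalities directly rather than citing Prop.~\ref{prop:poissonstrangeequil2} as a black box, and it makes the continuity step explicit via the sandwich bounds~(\ref{Vbounds}); your first ``obstacle'' about non-concavity is in fact a non-issue, since the sender's own maximisation in the HJB is linear-minus-convex in $e$ (the jump target $j(l)$ depends only on the market's expectation, not the sender's choice), so the first-order condition already characterises the best response.
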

The proof is in the appendix. 

Prop.~\ref{prop:poissonstrangeequil3} requires $d=0$, so there exists an effort level making the signal rate zero. Prop.~\ref{prop:poissonstrangeequil4} in the appendix covers the case $d>0$, so jump length is uniformly bounded. The expressions become more complicated, but the idea is the same as in Prop.~\ref{prop:poissonstrangeequil3}. 

Whenever one switched effort equilibrium exists, there is a continuum of them: if the endpoints of the switched effort region are shifted slightly and if $e_{\bad}^*(l)$ is adjusted to ensure that~(\ref{cornerBR2}) in the appendix holds, then the result is again an equilibrium.
More formally, the sufficient conditions in Prop.~\ref{prop:poissonstrangeequil3} and Prop.~\ref{prop:poissonstrangeequil4} hold in a nonempty open set of parameters. 

If $d>0$, then the log likelihood ratio never reaches $\pm\infty$, so type is never learned with certainty. There is a positive probability that when learning ends, either the $\bad$ type has reached $l>l_0$ or $\good$ has reached $l<l_0$, so a possibility exists of a `wrong' permanent reputation.

\subsection{Other equilibria} 

Equilibria in which $\bad$ exerts minimal effort and $\good$ maximal in some interval, outside which both types exert minimal effort are called \emph{extremal effort equilibria}. Extremal efforts occur in the scrutiny region of switched effort equilibria. Sufficient conditions for the existence of extremal effort equilibria are parts (a),(b) of Prop.~\ref{prop:poissonstrangeequil2} or of the subsequent Propositions. Value functions are given in closed form in the appendix in (\ref{bnewsodesols2}). 

If $d=0$, then the scrutiny region (where $e_{\bad}^*(l)=0$, $e_{\good}^*(l)=1$) may extend from some $l_1\in\mathbb{R}$ to $\infty$, as seen in Prop.~\ref{prop:poissonstrangeequil3} (a),(b). The larger the log likelihood ratio becomes, the greater the incentive to exert effort. This differs from the literature on signalling with Brownian noise, where efforts are highest for intermediate beliefs and go to zero as type becomes certain. The reason is that belief responds less to the Brownian signal when less uncertainty remains about the type. In the current paper, the effort incentive is determined by the jump length. Jumps go from $l$ to $-\infty$ when $d=0$, $e_{\bad}^*(l)=0$ and $e_{\good}^*(l)=1$. 

If $d>0$, then the scrutiny region is bounded above and below, as shown in Lemma~\ref{lem:scrubound}, which is proved in the appendix. 
Efforts are greatest at intermediate beliefs and go to zero as the market becomes certain of the type. 
\begin{lemma}
	\label{lem:scrubound}
If $d>0$, then $\sup\set{l:e_{\bad}^*(l)=0,\;e_{\good}^*(l)=1}<\infty$ and 

\noindent $\inf\set{l:e_{\bad}^*(l)=0,\;e_{\good}^*(l)=1}>-\infty$. 
\end{lemma}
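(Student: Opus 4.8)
\emph{Proof proposal.} The idea is that $d>0$ caps the length of the jump~\eqref{jumpb}, and this cap destroys the incentive to exert full effort once the market belief is extreme. Write $S:=\set{l:e_{\bad}^*(l)=0,\ e_{\good}^*(l)=1}$ for the scrutiny region (if $S=\emptyset$ the claim is vacuous, so assume $S\neq\emptyset$). For $l\in S$, \eqref{jumpb} gives $j(l)=l-\Delta$ with $\Delta:=\ln\frac{\lambda+d}{d}\in(0,\infty)$. At such an $l$ the good type's effort $e$ costs flow $c_{\good}(e)$ and lowers the signal rate from $\lambda+d$ to $(1-e)\lambda+d$, each signal costing him $V_{\good}(l)-V_{\good}(l-\Delta)$; hence $e$ maximises $e\lambda\bigl[V_{\good}(l)-V_{\good}(l-\Delta)\bigr]-c_{\good}(e)$ over $[0,1]$, and the maximiser equals $1$ only if
\[
\lambda\bigl[V_{\good}(l)-V_{\good}(l-\Delta)\bigr]\ \ge\ c_{\good}'(1)\ >\ 0 ,\qquad l\in S ,
\]
where $c_{\good}'(1)>0$ since $c_{\good}$ is strictly increasing and convex. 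Thus it suffices to prove $V_{\good}(l)-V_{\good}(l-\Delta)\to 0$ as $l\to\pm\infty$: then the displayed inequality fails for all sufficiently large $|l|$, forcing $S$ into a bounded interval, which is exactly $\sup S<\infty$ and $\inf S>-\infty$.

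To get the limits I would show $V_{\good}(l)\to\frac{\beta_{\max}}{r}$ as $l\to+\infty$ and $V_{\good}(l)\to\frac{\beta_{\min}}{r}$ as $l\to-\infty$. The tool is a pathwise bound that holds for \emph{every} effort plan of the good type: letting $N_t$ be the number of signals in $[0,t]$, Lemma~\ref{def:poissonloglr} together with $e_{\good}^*-e_{\bad}^*\in[-1,1]$ and $|j(l')-l'|\le\Delta$ for all $l'$ (the numerator and denominator of~\eqref{jumpb} both lie in $[d,\lambda+d]$) gives
\[
l-\lambda t-\Delta N_t\ \le\ l_t\ \le\ l+\lambda t+\Delta N_t ,\qquad t\ge 0 .
\]
Since the signal intensity $(1-e_t)\lambda+d$ never exceeds $\lambda+d$, $N_t$ is stochastically dominated by a Poisson variable of mean $(\lambda+d)t$, so for any $T$ one can pick $K$ with $\Pr(N_T>K)$ arbitrarily small. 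For the upper limit, feed the plan $e_{\good}\equiv 0$ into~\eqref{Poissonvalue}: $V_{\good}(l)\ge\mathbb{E}\!\bigl[\int_0^\infty e^{-rt}\beta(l_t)\,dt\bigr]$, and for $l$ large the left bound keeps $l_t$ above any prescribed level on $[0,T]$ with probability close to $1$, so (using $\beta\to\beta_{\max}$ and $\beta\ge\beta_{\min}$, then letting the tolerance shrink and $T\to\infty$) $\liminf_{l\to+\infty}V_{\good}(l)\ge\frac{\beta_{\max}}{r}$; combined with the standing bound $V_{\good}\le\frac{\beta_{\max}}{r}$ this gives the first limit. For the lower limit, drop the nonnegative cost term, $V_{\good}(l)\le\sup_{e_{\good}}\mathbb{E}\!\bigl[\int_0^\infty e^{-rt}\beta(l_t)\,dt\bigr]$, and now the right bound keeps $l_t$ below any prescribed level on $[0,T]$ with probability close to $1$ once $l$ is sufficiently negative, so $\limsup_{l\to-\infty}V_{\good}(l)\le\frac{\beta_{\min}}{r}$; with $V_{\good}\ge\frac{\beta_{\min}}{r}$ this gives the second limit.

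Putting the pieces together, as $l\to+\infty$ both $V_{\good}(l)$ and $V_{\good}(l-\Delta)$ tend to $\frac{\beta_{\max}}{r}$, and as $l\to-\infty$ both tend to $\frac{\beta_{\min}}{r}$; hence $V_{\good}(l)-V_{\good}(l-\Delta)\to 0$ in both directions, so there is $R<\infty$ with $\bigl|V_{\good}(l)-V_{\good}(l-\Delta)\bigr|<c_{\good}'(1)/\lambda$ whenever $|l|>R$. By the marginal condition, $l\notin S$ for $|l|>R$, i.e.\ $S\subseteq[-R,R]$, which proves the lemma. I expect the main obstacle to be the uniform-in-strategy stochastic bound on $(l_t)$ and the attendant $\varepsilon$–$T$ bookkeeping for $V_{\good}$ near $\pm\infty$; the marginal condition and $|j(l)-l|\le\Delta$ are immediate from~\eqref{jumpb}.
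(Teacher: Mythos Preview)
Your proof is correct and takes a genuinely different route from the paper's. Both arguments start from the bounded jump $\Delta=\ln\frac{\lambda+d}{d}$ and the necessary condition $\lambda[V_{\good}(l)-V_{\good}(l-\Delta)]\ge c'_{\good}(1)$ for $l\in S$. From there the paper telescopes this inequality along iterates $l,j(l),j^2(l),\ldots$ to cap the length of any scrutiny interval by $n^*\Delta$ with $n^*=\lambda(\beta_{\max}-\beta_{\min})/(rc'_{\good}(1))$; it then pushes $\overline{l}$ to a maximal right endpoint, invokes value matching $V_\theta(\overline{l})=\beta(\overline{l})/r$ (which presumes pooling immediately above $\overline{l}$), the explicit ODE solution~\eqref{bnewsodesols2} inside the scrutiny interval, and finally the flattening of $\beta$ near $\pm\infty$. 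You instead establish $\lim_{l\to\pm\infty}V_{\good}(l)$ directly from the pathwise envelope $|l_t-l_0|\le\lambda t+\Delta N_t$, valid uniformly over all Markov stationary $e^*$ and all effort plans; this forces $V_{\good}(l)-V_{\good}(l-\Delta)\to 0$ and kills the incentive at extreme $l$. Your route is shorter, avoids the ODE formula entirely, and is insensitive to what the equilibrium looks like outside $S$, so it covers equilibria the paper's argument does not obviously handle (e.g.\ ones without a pooling region adjacent to $S$). The paper's telescoping step, on the other hand, yields the extra quantitative information that every connected piece of $S$ has length at most $n^*\Delta$. The $\varepsilon$--$T$--$K$ bookkeeping you anticipate is routine: fix $T$, then $K$ with $\Pr(N_T>K)$ small under the dominating Poisson$(\lambda+d)$ process, then send $l\to\pm\infty$, then $K\to\infty$, then $T\to\infty$.
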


Equilibria in which $e_{\good}^*(l)=e_{\bad}^*(l)>0$ for some $l$ do not exist, because the log likelihood ratio does not respond to signals when $e_{\good}^*(l)=e_{\bad}^*(l)$. Both types will then save cost by minimising effort. This completes the characterisation of equilibria in which $e_{\good}^*(l),e_{\bad}^*(l)\in\set{0,1}$. Such equilibria are either extremal effort or pooling on $0$. Equilibria with interior efforts $e_{\good}^*(l)\in(0,1)$ or $e_{\bad}^*(l)\in(0,1)$ are more difficult to characterise. The results about these (other than switched effort equilibria) are summarised next. 

There are no equilibria in which $e_{\good}^*(l)<e_{\bad}^*(l)$ in a positive-length interval $[l_a,l_b]$, and $e_{\good}^*(l)=e_{\bad}^*(l)=0$ for $l>l_b$. 
The endpoint $l_b$ must be finite, because the incentives require $V_{\bad}$ to decrease over the course of the jump. If jumps up continue indefinitely, then the flow benefit keeps increasing, so $V_{\bad}$ increases. The incentives require the value functions after a jump to be strictly ordered: $V_{\good}(j(l))>V_{\bad}(j(l))$. If the jump ends at $j(l)$ with $e_{\good}^*(j(l))=e_{\bad}^*(j(l))=0$, as occurs from close to the right endpoint $l_b$ of $[l_a,l_b]$, then $V_{\good}(j(l))=V_{\bad}(j(l))=\frac{\beta(l)}{r}$, a contradiction. The scrutiny region of switched effort equilibria is thus indispensable. 

If the cost functions are linear, then there are no equilibria in which $1>e_{\good}^*(l)>e_{\bad}^*(l)\geq0$ in an interval $[l_a,l_b]$ of positive length, with $e_{\good}^*(l)=e_{\bad}^*(l)=0$ elsewhere. Linearity of cost requires indifference between efforts $0,1$ at $l$ for $e_{\theta}^*(l)\in(0,1)$. The value at $l$ is then the same as when the chosen effort $e_{\theta}(l)$ is zero, but the receivers expect the equilibrium efforts $1>e_{\good}^*(l)>e_{\bad}^*(l)\geq0$. This holds at all $l\in[l_a,l_b]$. If the chosen effort is zero everywhere, then the payoffs to the types are identical everywhere. 
Therefore $V_{\good}(l)=V_{\bad}(l)$ at all $l$. In particular, $V_{\good}(j(l))-V_{\good}(l)=V_{\bad}(j(l))-V_{\bad}(l)$. Then the strong single crossing of the cost functions precludes both types from being indifferent between efforts $0,1$ simultaneously. 
The result extends to convex costs that are uniformly close to linear functions, because the model is continuous in the parameters. 

If $d=0$ and the cost functions are linear, then there are no equilibria in which $e_{\bad}^*(l)\in(0,1)$ and $e_{\good}^*(l)=1$ for $l$ in an interval $(l_a,l_b)$ of positive length. The log likelihood ratio would jump to $-\infty$ from $(l_a,l_b)$, so if $\bad$ is indifferent between efforts $0,1$ at $l_c\in(l_a,l_b)$, then $\bad$ strictly prefers $0$ at any $l<l_c$ and $1$ at any $l>l_c$. 


\subsection{Exogenous information revelation}

Switched effort equilibria also exist for some parameter values when the Poisson rate of jumps is $\lambda(1-e_{\theta})+d_{\theta}$ for type $\theta$, where $d_{\good}\neq d_{\bad}$ and $d_{\good},d_{\bad}>0$. In this case, if the efforts of the types are the same, then the signal is informative and $l$ still moves. Pooling equilibria may not exist, because $l$ responding to signals may motivate the sender to exert effort. The possibility of switched efforts with exogenous information revelation is less surprising than in the pure signalling case, because the countersignalling result of \cite{feltovich+2002} uses exogenous info. The mechanism of countersignalling relies on there being at least three types. This paper has two types, so the mechanism for switched efforts is distinct from countersignalling. 

The value functions will be proved jointly continuous in $d_{\good}$ and $d_{\bad}$, so if the assumptions of 
Prop.~\ref{prop:poissonstrangeequil2} hold strictly, then there is a nonempty open set of $d_{\good},d_{\bad}$ for which there exists a switched effort equilibrium. 
\begin{prop}
	\label{prop:dtheta}
	If 
Prop.~\ref{prop:poissonstrangeequil2} (a)--(f) hold strictly, then there exists $\delta>0$ s.t.\ for any $0<d_{\good}<d_{\bad}$ with $|d_{\theta}-d|<\delta$, there exists a switched effort equilibrium. 
\end{prop}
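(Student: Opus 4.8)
\medskip
\noindent\emph{Proof plan.}  The plan is to obtain the equilibrium as a perturbation, in the pair $(d_\good,d_\bad)$, of the construction behind Prop.~\ref{prop:poissonstrangeequil2}, with the joint continuity of the value functions in $(d_\good,d_\bad)$ as the main new input.  Let $\underline l<l_1$ and $\overline l\in(l_1,\infty]$ be as in the hypothesis of Prop.~\ref{prop:poissonstrangeequil2}, for which (a)--(f) hold strictly at $d_\good=d_\bad=d$.  For $(d_\good,d_\bad)$ near $(d,d)$ keep the same regime skeleton: scrutiny $(e_\bad^*,e_\good^*)=(0,1)$ on $(l_1,\overline l)$, $(e_\bad^*,e_\good^*)=(0,0)$ off $(\underline l,\overline l)$, and on the early career $(\underline l,l_1]$ let $e_\bad^*(l)$ solve the pair $\lambda[V_\bad(l)-V_\bad(j(l))]=c_\bad'(e_\bad^*(l))$, $\;j(l)=l+\ln\frac{\lambda+d_\good}{\lambda(1-e_\bad^*(l))+d_\bad}$.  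The only changes relative to Prop.~\ref{prop:poissonstrangeequil2} are that the jump map becomes $j(l)=l+\ln\frac{\lambda(1-e_\good^*(l))+d_\good}{\lambda(1-e_\bad^*(l))+d_\bad}$ and the no--signal drift of $(l_t)$ becomes $\lambda(e_\good^*-e_\bad^*)+(d_\bad-d_\good)$; in particular on $\overline{\mathbb{R}}\setminus(\underline l,\overline l)$ the state is no longer frozen but drifts up at rate $d_\bad-d_\good$ and jumps down by $\ln\frac{\lambda+d_\bad}{\lambda+d_\good}$ at each signal.

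The core step is the lemma, announced before the statement, that the value functions $V_\theta(\cdot)$ of this candidate depend jointly continuously on $(d_\good,d_\bad)$, uniformly in $l$.  Note that $e_\bad^*(\cdot)$ on the early career (hence the jump map $j$ there) is itself determined jointly with $V_\bad$ through the FOC/consistency pair; I would pin it down, as in Prop.~\ref{prop:poissonstrangeequil2}, by a fixed--point argument, now carrying $(d_\good,d_\bad)$ as parameters, and use the implicit function theorem---strict convexity of $c_\bad$, with $e_\bad^*$ bounded away from $0$ on $(\underline l,l_1]$ as in the discussion of Prop.~\ref{prop:poissonstrangeequil2}~(e)---to make the fixed point continuous in $(d_\good,d_\bad)$.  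Given that, $V_\theta$ is the bounded solution of the ODE/HJB system whose data---$\beta$, $c_\theta$, $\lambda$, $r$, $e^*$, the jump map $j$, and the no--signal drift---all vary continuously in $(d_\good,d_\bad)$; stability of the bounded solution under continuous perturbation of these data (equivalently, a coupling estimate for the piecewise--deterministic process $(l_t)$ together with boundedness of the flow payoff and $r>0$, which make the post--$T$ tail of (\ref{Poissonvalue}) uniformly negligible, and the standard sandwich argument for the supremum over $e_\theta$) gives the required joint continuity.

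Granting the lemma, the conclusion is an open--condition argument.  Every quantity entering conditions (a)--(f)---the suprema and infima $\overline{V}^{>}_{\theta},\underline{V}^{>}_{\theta},\overline{V}^{1}_{\theta},\underline{V}^{1}_{\theta}$, the map $j$, $\lim_{l\to\overline l}j(l)$, and the marginal--cost arguments, including the one in (e), which depends on $(d_\good,d_\bad)$ through the maximal early--career jump length---is continuous in $(d_\good,d_\bad)$, so the strict inequalities (a)--(f), valid at $(d,d)$, persist for $|d_\theta-d|<\delta$.  The consistency of $e_\bad^*$ with $j$ on the early career holds by construction, and the interior best response of $\bad$ there is produced by the Mean Value Theorem exactly as in Prop.~\ref{prop:poissonstrangeequil2}.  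The one genuinely new check is that $e_\bad(l)=e_\good(l)=0$ is still optimal on $\overline{\mathbb{R}}\setminus(\underline l,\overline l)$, where the signal is now informative: this requires $\lambda[V_\theta(l)-V_\theta(j(l))]\le c'_\theta(0)$ for both $\theta$, and since $j\to\mathrm{id}$ uniformly while each $V_\theta$ is uniformly continuous and $c'_\good(0),c'_\bad(0)>0$ strictly, it holds after shrinking $\delta$.  (If the base $d=0$, so that $\overline l=\infty$, first replace Prop.~\ref{prop:poissonstrangeequil2} by its $d>0$ counterpart Prop.~\ref{prop:poissonstrangeequil4}, whose conditions are also strict near $(0,0)$, tracking $\overline l(d_\good,d_\bad)\to\infty$ uniformly.)  I expect the main obstacle to be the continuity lemma itself---the uniform--in--$l$ joint continuity of $V_\theta$ together with the continuity of the implicitly defined early--career effort---since once that is available, propagating the strict inequalities of Prop.~\ref{prop:poissonstrangeequil2} is routine.
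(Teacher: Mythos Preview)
Your strategy---prove joint continuity of $V_\theta$ in $(d_\good,d_\bad)$ and then observe that the strict versions of (a)--(f) are open conditions---is exactly the paper's.  The paper, however, establishes the continuity lemma by explicit region-by-region computation rather than by an abstract HJB-stability appeal, and one region requires care that your sketch does not anticipate: with $d_\good<d_\bad$, the point $\underline{l}$ becomes a \emph{stasis point}.  To its left $e_\good^*=e_\bad^*=0$, so the drift of $l$ is $d_\bad-d_\good>0$; to its right (early career) the drift is $-\lambda e_\bad^*(l)+(d_\bad-d_\good)<0$.  The process therefore oscillates around $\underline{l}$ with infinite frequency and zero amplitude, spending a computable fraction $w$ of the time infinitesimally to the left (with short jumps to $\underline{l}+\ln\frac{\lambda+d_\good}{\lambda+d_\bad}$) and $1-w$ to the right (with long jumps into scrutiny).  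The paper writes $V_\theta(\underline{l})$ explicitly as the resulting weighted average and shows by a Squeeze-Theorem argument that $w\to1$ and $V_\theta(\underline{l})\to\beta(\underline{l})/r$ as $d_\good,d_\bad\to d$.  Your coupling/ODE-stability heuristic does not obviously cover this: the drift is discontinuous in $l$ at $\underline{l}$, and the value there is not the solution of the HJB ODE on either side but a mixture.  This, rather than the implicit-function step for $e_\bad^*$, is where the real content of your continuity lemma sits.

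Two smaller remarks.  You assert $c'_\good(0)>0$ for the pooling-region check, but the paper's hypotheses do not guarantee this (e.g.\ $c_\good(e)=e^2$ is admissible); the paper handles the pooling regions by showing directly that $V_\theta(l)\to\beta(l)/r$ there (jumps shrink and drift vanishes, so the flow payoff stays within $\epsilon$ of $\beta(l)$ with high probability for arbitrarily long times), which sidesteps the issue.  Separately, the paper's proof is phrased for finite $\overline{l}$ (it invokes the $d>0$ framework of Prop.~\ref{prop:poissonstrangeequil4}); your parenthetical about the base $d=0$ case is pointing at a real nuisance, since condition~(f) cannot survive perturbation to $d_\good>0$ with $\overline{l}=\infty$.
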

The proof is in the appendix. If $d_{\good}<d_{\bad}$ and $e_{\good}^*(l)=e_{\bad}^*(l)$, then $l$ drifts up and jumps down.

A similar result to Prop.~\ref{prop:dtheta} can be derived with $d_{\good}>d_{\bad}$. This is a less intuitive assumption, because the $\good$ type is disadvantaged in the signal, but advantaged in the cost. If $d_{\good}>d_{\bad}$ and $e_{\good}^*(l)=e_{\bad}^*(l)$, then the drift of $l$ goes down and the jumps up. Switched effort equilibria are still continuous in $|d_{\theta}-d|$. 

\subsection{Robustness}

A continuity argument analogous to Prop.~\ref{prop:dtheta} shows that switched effort equilibria exist when the signal structure is a L\'{e}vy process in which the dominant component is the Poisson process considered in the benchmark model in Section~\ref{sec:setup}.

Adding more types with a two-peaked distribution is expected to yield results similar to the two-type case. Again, this follows from the continuity of the value functions in the prior. The state variable (distribution over types) then becomes multidimensional, which complicates the analysis. 

If the benefit depends to a small extent on the true type, the signal, the actual effort or the effort the market expects from the sender, then the equilibrium is again continuous in the extent of the influence of these factors. 

If the environment is modified so that the benefit depends partly on the effort the market expects, then the model becomes somewhat similar to the career concerns situation considered in \cite{holmstrom1999}. The remaining conceptual difference between the present paper and \cite{holmstrom1999} is that, in the present paper, the sender knows his type. If the sender does not know his type, then all types choose the same effort, so switched efforts are impossible. 

A more radical departure from the current model is to make the benefit depend \emph{only} on the effort the market expects. Suppose the market expects a strategy consisting of switched efforts on $(\underline{l},l_1]$, efforts $e_{\good}^*(l)>e_{\bad}^*(l)$ for $(l_1,\overline{l})$, and pooling elsewhere. This strategy is similar to the one in Prop.~\ref{prop:dtheta}. If the market expects this strategy, then the benefit at any $l\in(\underline{l},\overline{l})$ increases in $d_{\bad} -d_{\good}$, and is zero when $d_{\bad} =d_{\good}$. With a large enough $d_{\bad} -d_{\good}$, it may be possible to construct a switched effort equilibrium according to the pattern used above, but its existence cannot be guaranteed based on the continuity of $V_{\theta}$ alone. This existence question is left for future research.

\section{Noiseless switched efforts}

Equilibria in which $\bad$ exerts more effort than $\good$ for some belief can be constructed in noiseless discrete time signalling using belief threats. This is less surprising than the existence of a switched effort equilibrium with full-support noise, because the noise forces Bayes' rule to apply everywhere. To the author's knowledge, the possibility of switched efforts in pure signalling models (that have no exogenous information revelation) has not been suggested in the literature. Such equilibria may exist in other models, but other authors have not mentioned it.

In this model, time is discrete and the horizon infinite. The types are $\theta=\good,\bad$. The initial log likelihood ratio is $l_0$. Each period, the sender chooses effort $e\in\mathbb{R}_+$, which the receivers observe and use to update their log likelihood ratio. The per-period payoff is $\beta(l)-\hat{c}_{\theta}\cdot e$ for type $\theta$, with $\beta(l)=\frac{\exp(l)}{1+\exp(l)}$ (risk neutral sender) and $\hat{c}_{\bad}>\hat{c}_{\good}>0$. The discount factor is $\delta$. 

In the equilibrium constructed, during the first period, $\bad$ chooses $e_0>0$, while $\good$ puts probability $q_{\good}\in(0,1)$ on $e_0$ and $1-q_{\good}$ on $0$. After $e_0$ is publicly observed in the first period, both types choose $e=0$ forever. 
After $0$ is seen in the first period, $\bad$ exerts zero effort forever, but $\good$ chooses $e_1>0$ forever. The log likelihood ratio does not respond to deviations after $e_0$. After $0$ in the first period, as long as $e_1$ has occurred every period, $l=\infty$. A deviation from $e_1$ after $0$ in the first period is followed by $l=-\infty$ and zero effort forever. This is the belief threat that sustains effort by $\good$ under scrutiny. Any actions not specified above are punished with $l=-\infty$ forever. 

After $0$ is seen in the first period, the constraints for $\good$ to choose $e_1$ and $\bad$ to choose $0$ are 
$\beta_{\max}-\hat{c}_{\good}e_1 \geq\beta_{\min} \geq \beta_{\max}-\hat{c}_{\bad}e_1$. 
The constraint for $\bad$ to exert $e_0$ in the first period is 
\begin{align*}
\frac{1}{1-\delta}\frac{q_{\good}\exp(l_0)}{1+q_{\good}\exp(l_0)}-\hat{c}_{\bad}e_0 \geq \beta_{\max}+\frac{\delta}{1-\delta}\beta_{\min}.
\end{align*}
For $\good$ to be indifferent between $0$ and $e_0$ in the first period, it must be that
\begin{align*}
&\frac{1}{1-\delta}\frac{q_{\good}\exp(l_0)}{1+q_{\good}\exp(l_0)} -\hat{c}_{\good}e_0 
=\beta_{\max}+\frac{\delta}{1-\delta}(\beta_{\max}-\hat{c}_{\good}e_1).
\end{align*}

Take $\delta=\frac{9}{10}$, $c_{\good}=1$, $c_{\bad}=2$, $l_0=0$. The above constraints hold for $e_0=\frac{1}{2}$, $q_{\good}=\frac{1}{3}$ and $e_1=\frac{9}{10}$. This constitutes a switched effort equilibrium. Such equilibria exist for a nonempty open set of parameter values. 

In this model, it is not possible that $\bad$ exerts more effort than $\good$ with probability one, because then high effort would reveal $\bad$, yielding the minimal flow payoff thereafter. Then $\bad$ would deviate to the lowest effort level that the market expects from $\good$. Full separation cannot occur via switched efforts. 

Adding noise makes discrete time repeated signalling intractable, because the state variable (log likelihood ratio) then takes values in an infinite discrete set that is not a regular grid. Neither differential nor difference equations can be used.

\section{Conclusion}

Repeating pure signalling permits equilibria in which the high-cost type exerts more effort than the low-cost for some beliefs of the receivers. These equilibria are numerous and occur both with and without noise, fully revealing signals or exogenous information revelation. To the author's knowledge, the literature has not alluded to the possibility of such equilibria in a pure signalling context. 

Higher signalling effort by the high-cost types can be interpreted as a sign of insecurity---they are trying to avoid future information revelation and effort. Intuitively, the weak act tough to deter attack, because they know they could not handle it. Similarly, the guilty avoid an investigation. The low-cost types know that they can compensate in the future for current low effort, should such compensation be necessary. They also know that future information revelation is likely to be good for them. 

Switched effort equilibria overturn one of the key intuitions from previous signalling models. A good signal can go bad and then become good again, meaning that the same observation may raise belief at one point of the game and lower it at another. Single crossing in type and cost does not carry over to single crossing in type and action. In fact, at some point in the game, effort levels are ordered in the exact reverse order to that commonly found in the literature. This result does not appear in countersignalling, where the high types pool with the low and separate from the medium types. 

Effort and cost are spread more equally across types in a switched effort equilibrium than with standard separation, because the bad type also exerts positive effort. If signalling has a positive externality (outside the scope of the current model, e.g.\ education is good for civil society) and the social benefit from the externality outweighs its cost, then it may be encouraged as a matter of policy. Fairness considerations may then imply a preference for switched efforts, e.g.\ to lead the low-ability workers to acquire at least a minimum of education. Switched effort equilibria spread the cost more evenly than extremal efforts, but less evenly than pooling. However, the zero education in pooling may be undesirable, leaving switched efforts as the best compromise.

\appendix
\section{}
\label{sec:proofs}

\begin{proof}[Proof of Prop.~\ref{prop:poissonstrangeequil2}.]
	Condition (f) implies that if  $e_{\good}^*(l)=1$, $e_{\bad}^*(l)=0$, then $e_{\good}^*(j(l))=e_{\bad}^*(j(l))=0$. This implies $V_{\theta}(j(l))=\frac{\beta(j(l))}{r}$.
	
	The Hamilton-Jacobi-Bellman (HJB) equation for type $\theta$ is
	\begin{align}
	\label{HJB2}
	rV_{\theta}(l)&=\beta(l)+\lambda\left[e_{\good}^*(l)-e_{\bad}^*(l)\right] V_{\theta}'(l) 
	\\&+\max_e\set{[\lambda(1-e)+d]\left[V_{\theta}\left(j(l)\right)-V_{\theta}(l)\right]-c_{\theta}(e)}. \notag
	\end{align}
	The best response solves $\lambda\left[V_{\theta}(l) -V_{\theta}\left(j(l)\right)\right]=c'_{\theta}(e)$ if interior. Corner solutions for the best response are given by 
	\begin{align}
	\label{cornerBR2}
	e_{\theta}(l)=\begin{cases}
	0 & \text{ if } \lambda\left[V_{\theta}(l) -V_{\theta}\left(j(l)\right)\right]\leq c'_{\theta}(0), \\
	1 & \text{ if } \lambda\left[V_{\theta}(l) -V_{\theta}\left(j(l)\right)\right]\geq c'_{\theta}(1). \\
	\end{cases}
	\end{align}
	A verification theorem (Theorem~4.6 in \cite{presman+1990} as modified for the discounted case in~\cite{yushkevich1988}) is used to check that the solutions of~(\ref{HJB2}) coincide with the value functions. 
	
	In the   $e_{\good}^*(l)=1$, $e_{\bad}^*(l)=0$ region, condition~(a) implies $V_{\good}(l) -V_{\good}\left(j(l)\right)\geq \frac{c'_{\good}(1)}{\lambda}$ for all $l$, because $V_{\good}(l)\geq \underline{V}^1_{\good}$ and $V_{\good}(j(l))=\frac{\beta(j(l))}{r}$. In turn, $V_{\good}(l) -V_{\good}\left(j(l)\right)\geq \frac{c'_{\good}(1)}{\lambda}$ implies $e_{\good}(l)=1$ based on~(\ref{cornerBR2}). Therefore condition~(a) suffices for $e_{\good}(l)=1\;\forall l$ in the region with   $e_{\good}^*(l)=1$, $e_{\bad}^*(l)=0$. 
	
	Condition~(b) implies $V_{\bad}(l) -V_{\bad}\left(j(l)\right)\leq \frac{c'_{\bad}(0)}{\lambda}$, because $V_{\bad}(l)\leq \overline{V}^1_{\bad}$ and $V_{\bad}(j(l))=\frac{\beta(j(l))}{r}$. In turn, $V_{\bad}(l) -V_{\bad}\left(j(l)\right)\leq \frac{c'_{\bad}(0)}{\lambda}$ implies $e_{\bad}(l)=0$ based on~(\ref{cornerBR2}). Therefore condition~(b) suffices for $e_{\bad}(l)=0\;\forall l$ in the region with   $e_{\good}^*(l)=1$, $e_{\bad}^*(l)=0$. 
	
	If $e_{\bad}^*(l)>e_{\good}^*(l)=0$, then $\lim_{e_{\bad}^*(l)\rightarrow 1}j(l)=l+\lim_{e_{\bad}^*(l)\rightarrow 1}\ln\frac{1+d}{1-e_{\bad}^*(l)+d}\geq \overline{l}$ by (\ref{jumpb}) and (f). 
	In the equilibrium constructed, $e_{\bad}^*(l)$ is chosen in the $e_{\bad}^*(l)>e_{\good}^*(l)=0$ region to ensure the jumps end in the   $e_{\good}^*(l)=1$, $e_{\bad}^*(l)=0$ region (formally $e_{\good}^*(j(l))=1,e_{\bad}^*(j(l))=0$). 
	Condition~(c) implies $V_{\good}(l) -V_{\good}\left(j(l)\right)\leq \frac{c'_{\good}(0)}{\lambda}$, which in turn implies $e_{\good}(l)=0$ based on~(\ref{cornerBR2}). Therefore condition~(c) suffices for $e_{\good}(l)=0\;\forall l$ in the region with $e_{\bad}^*(l)>e_{\good}^*(l)=0$. 
	
	To show that~(d) and~(e) imply the existence of an equilibrium level of $e_{\bad}^*(l)$ at each $l$ in the $e_{\bad}^*(l)>e_{\good}^*(l)=0$ region, first the candidate value functions in the   $e_{\good}^*(l)=1$, $e_{\bad}^*(l)=0$ region are calculated. Substituting $e_{\bad}^*(l)=e_{\bad}(l)=0$ and $e_{\good}^*(l)=e_{\good}(l)=1$ into each type's HJB equation and solving the resulting ordinary differential equations (ODEs) yields
	\begin{align}
	\label{bnewsodesols2}
	V_{\good}(l)&=\exp\left(-\frac{(r+d)(\overline{l}-l)}{\lambda}\right)V_{\good}(\overline{l}) \notag
	\\&+ \int_{l}^{\overline{l}}\left[\frac{\beta(z)-c_{\good}(1)}{\lambda}+\frac{\beta(j(z))d}{r\lambda}\right]\exp\left(-\frac{(r+d)(z-l)}{\lambda}\right)dz,
	\\ V_{\bad}(l)&=\exp\left(-(r+\lambda+d)\frac{\overline{l}-l}{\lambda}\right)V_{\bad}(\overline{l}) \notag
	\\& +\int_{l}^{\overline{l}}\left[\frac{\beta(z)}{\lambda}+\frac{(\lambda+d)\beta(j(z))}{r\lambda}\right]\exp\left(-(r+\lambda+d)\frac{z-l}{\lambda}\right)dz,\notag
	\end{align}
	where $j(z)=z+\ln\left(\frac{d}{\lambda+d}\right)$. 
	If $\overline{l}$ is finite, then value matching gives $\lim_{l\rightarrow \overline{l}}V_{\theta}(l)=V_{\theta}(\overline{l})=\frac{\beta(\overline{l})}{r}$, which provides the boundary condition for the ODEs in~(\ref{bnewsodesols2}). 
	By~(\ref{bnewsodesols2}), $V_{\theta}$ is strictly increasing in $l$, so $\underline{V}^1_{\theta}=V_{\theta}(l_1)$ and $\overline{V}^1_{\theta}=\lim_{l\rightarrow \overline{l}}V_{\theta}(l)$. The limit is relevant when $\overline{l}=\infty$, in which case $\lim_{l\rightarrow \overline{l}}V_{\theta}(l)< \frac{\beta_{\max}}{r}$, as can be seen from~(\ref{bnewsodesols2}). 
	
	
	Given conditions (a)--(c) and (f), $e_{\bad}^*(l)$ in the region with $e_{\bad}^*(l)>e_{\good}^*(l)=0$ is part of an equilibrium iff 
	\begin{align}
	\label{equilcond2}
	\lambda\left[V_{\theta}(l) -V_{\theta}\left(l+\frac{\lambda+d}{\lambda(1-e_{\bad}^*(l))+d}\right)\right]=c'_{\theta}(e_{\bad}^*(l)).
	\end{align} 
	This condition merely says $e_{\bad}^*(l)$ is a best response to $e_{\good}^*(l)=0$ and itself. Note that the left hand side (LHS) of~(\ref{equilcond2}) strictly decreases in $e_{\bad}^*(l)$ and the right hand side (RHS) increases. The LHS is continuous, because $l$ is fixed and $V_{\theta}\left(l+\frac{\lambda+d}{\lambda(1-e_{\bad}^*(l))+d}\right)$ is continuous by~(\ref{bnewsodesols2}). The RHS is continuous by assumption. 
	
	Condition~(d) implies $V_{\bad}(l)-\overline{V}^{1}_{\bad}< \frac{c'_{\bad}(1)}{\lambda}$ and (e) implies $V_{\bad}(l)-\underline{V}^{1}_{\bad}\geq \frac{c'_{\bad}(\hat{e})}{\lambda}$ for $\hat{e}$ solving $\underline{l}+\ln\left(\frac{\lambda+d}{\lambda(1-\hat{e})+d}\right)=l_1$. These bounds, the continuity of~(\ref{equilcond2}) and the Mean Value Theorem imply that there exists $e_{\bad}^*(l)\in[\hat{e},1)$ s.t.~(\ref{equilcond2}) holds. 
\end{proof}

\begin{proof}[Proof of Prop.~\ref{prop:poissonstrangeequil3}.]
	(a) $\underline{V}^1_{\good}=\int_{l_1}^{\infty}\frac{\beta(z)-c_{\good}(1)}{\lambda}\exp\left(-r\frac{z-l}{\lambda}\right)dz$ when $\overline{l}=\infty$ based on~(\ref{bnewsodesols2}). When  $e_{\good}^*(l)=1$, $e_{\bad}^*(l)=0$ and $d=0$, then $j(l)=l+\ln\frac{0}{1}=-\infty$ by~(\ref{jumpb}), so $V_{\theta}(j(l))=\frac{\beta_{\min}}{r}$. By~(\ref{cornerBR2}), condition~(a) suffices for $e_{\good}(l)=1$ for any $l$ s.t.\   $e_{\good}^*(l)=1$, $e_{\bad}^*(l)=0$.
	
	(b) $\lim_{l\rightarrow\infty}V_{\bad}(l)=\frac{\beta_{\max}}{r+\lambda}+\frac{\lambda\beta_{\min}}{r(r+\lambda)}$ when $\overline{l}=\infty$ based on~(\ref{bnewsodesols2}). By~(\ref{cornerBR2}), condition~(b) suffices for $e_{\bad}(l)=0$ for any $l$ s.t.\ $e_{\good}^*(l)=1$, $e_{\bad}^*(l)=0$, because $j(l)=-\infty$. Equilibrium behaviour in the   $e_{\good}^*(l)=1$, $e_{\bad}^*(l)=0$ region is thus guaranteed. 
	
	(c)--(e). 
	In the $e_{\bad}^*(l)>e_{\good}^*(l)=0$ region, in the absence of signals, $l$ drifts down by~(\ref{poissonbnewsloglr}). At $\hat{l}\in(\underline{l},l_0]$, the probability of reaching $\underline{l}$ is $\exp\left(\int_{\underline{l}}^{\hat{l}}\frac{\lambda(1-e_{\bad}^*(z))}{-\lambda e_{\bad}^*(z)}dz\right)$ for $\bad$, because at $l$, signals occur at rate $\lambda(1-e_{\bad}^*(l))$ and $l$ drifts $\lambda[e_{\good}^*(l)-e_{\bad}^*(l)]$ per unit of time. For $\good$, the probability is 
	$\exp\left(\int_{\underline{l}}^{\hat{l}}\frac{\lambda}{-\lambda e_{\bad}^*(z)}dz\right)$, smaller than for $\bad$ because of lower effort of avoiding jumps. The payoff conditional on reaching $\underline{l}$ is $\frac{\beta(\underline{l})}{r}$ and conditional on not reaching, bounded by $\frac{\beta_{\min}}{r}$ and $\frac{\beta_{\max}}{r}$. The discount rate is $r$ and the time it takes to drift from $l$ to $\underline{l}$ is $\left|\int_{\underline{l}}^{l}\frac{dz}{\lambda[e_{\good}^*(z)-e_{\bad}^*(z)]}\right|$. Therefore for any $l\in(\underline{l},l_0]$, 
	\begin{align}
	\label{Vbounds}
	&\exp\left(\int_{\underline{l}}^{l}\frac{- dz}{ e_{\bad}^*(z)}\right)\exp\left(\int_{\underline{l}}^{l}\frac{-rdz}{\lambda e_{\bad}^*(z)}\right)\frac{\beta(\underline{l})}{r} +\left[1-\exp\left(\int_{\underline{l}}^{l}\frac{\lambda dz}{-\lambda e_{\bad}^*(z)}\right)\right]\frac{\beta_{\max}}{r} \notag
	\\&\geq 
	V_{\theta}(l)\geq 
	\\&\exp\left(\int_{\underline{l}}^{l}\frac{- dz}{ e_{\bad}^*(z)}\right)\exp\left(\int_{\underline{l}}^{l}\frac{-rdz}{\lambda e_{\bad}^*(z)}\right)\frac{\beta(\underline{l})}{r} +\left[1-\exp\left(\int_{\underline{l}}^{l}\frac{\lambda dz}{-\lambda e_{\bad}^*(z)}\right)\right]\frac{\beta_{\min}}{r}. \notag
	\end{align}
	To construct the $e_{\bad}^*(l)>e_{\good}^*(l)=0$ part of the equilibrium, first conjecture $e_{\bad}^*(l)\geq 1-\exp\left(l-l_1\right)\;\forall l\in(\underline{l},l_1)$ so that $e_{\good}^*(j(l))=1$, $e_{\bad}^*(j(l))=0$, second show that $V_{\theta}$ is continuous at $\underline{l}$, third use (c) to ensure $e_{\good}^*(l)=0$, and fourth use (d),(e) to ensure $e_{\bad}^*(l)\in[1-\exp\left(l-l_1\right), 1)$, verifying the conjecture. 
	
	If $e_{\bad}^*(l)\geq \epsilon>0\;\forall l\in(\underline{l},l_0]$, then by~(\ref{Vbounds}) and the Squeeze Theorem, $\lim_{l\rightarrow \underline{l}+}V_{\theta}(l)=\frac{\beta(\underline{l})}{r}$. 
	This and (c) imply the existence of $l_2\in(\underline{l},l_1)$ s.t.\ for all $l\in(\underline{l},l_2]$, $V_{\good}(l)-V_{\good}(j(l))\leq \frac{c'_{\good}(0)}{\lambda}$, which by~(\ref{cornerBR2}) implies $e_{\good}(l)=0$. 
	
	Condition~(d), $\lim_{l\rightarrow\infty}V_{\bad}(l)=\frac{\beta_{\max}}{r+\lambda}+\frac{\lambda\beta_{\min}}{r(r+\lambda)}$ and $\lim_{l\rightarrow \underline{l}+}V_{\theta}(l)=\frac{\beta(\underline{l})}{r}$ imply the existence of $l_3\in(\underline{l},l_1)$ s.t.\ for all $l\in(\underline{l},l_3]$, $V_{\bad}(l)-V_{\bad}(j(l))< \frac{c'_{\bad}(1)}{\lambda}$, which by~(\ref{cornerBR2}) implies $e_{\bad}(l)<1 $. 
	
	Condition~(e) similarly implies the existence of $l_4\in(\underline{l},l_1)$ s.t.\ for all $l\in(\underline{l},l_4]$, $V_{\bad}(l)-V_{\bad}(j(l))\geq \frac{c'_{\bad}\left(1-\exp(l -l_1)\right)}{\lambda}$, which by~(\ref{cornerBR2}) implies $e_{\bad}(l)\geq 1-\exp\left(l-l_1\right)$. 
	
	Set $l_0:=\min\set{l_2,l_3,l_4}$. As in the proof of Prop.~\ref{prop:poissonstrangeequil2}, the bounds from conditions (d),(e) and the Mean Value Theorem imply that there exists $e_{\bad}^*(l)\in[1-\exp\left(l-l_1\right), 1)$ s.t.~(\ref{equilcond2}) holds. 
	
\end{proof}

\begin{prop}
	\label{prop:poissonstrangeequil4}
	Fix $\underline{l},l_1,\overline{l}\in\mathbb{R}$ with $\underline{l}<l_1<\overline{l}$. If $d>0$ and
	\begin{enumerate}[(a)]
		\item $\min_{l\in[l_1,\overline{l}]}\set{V_{\good}(l) -\frac{\beta\left(l+\ln\left(\frac{d}{\lambda+d}\right)\right)}{r} } \geq \frac{c'_{\good}(1)}{\lambda}$, 
		\item  $\max_{l\in[l_1,\overline{l}]}\set{V_{\bad}(l) -\frac{\beta\left(l+\ln\left(\frac{d}{\lambda+d}\right)\right)}{r} } \leq \frac{c'_{\bad}(0)}{\lambda}$, 
		\item $\frac{\beta(\underline{l})}{r} -V_{\good}(l_1)< \frac{c'_{\good}(0)}{\lambda}$,
		\item $\frac{\beta(\underline{l})}{r} -\frac{\beta(\overline{l})}{r}< \frac{c'_{\bad}(1)}{\lambda}$,
		\item $\frac{\beta(\underline{l})}{r} -V_{\bad}(l_1)> \frac{c'_{\bad}\left(1-d/\lambda-(\lambda+d)\exp(\underline{l}-l_1 )/\lambda\right)}{\lambda}$,
		\item $\overline{l}+\ln\left(\frac{d}{\lambda+d}\right)< \underline{l}$, 
	\end{enumerate}
where $V_{\theta}$ is given in~(\ref{bnewsodesols2}), then there exists $l_0\in(\underline{l},l_1)$ and an equilibrium in which 
	\begin{itemize}
		\item[] $e_{\bad}^*(l)>e_{\good}^*(l)=0$ if $l\in(\underline{l},l_0]$,
		\item[] $e_{\bad}^*(l)=0$, $e_{\good}^*(l)=1$ if $l\in[l_1,\overline{l}]$,
		\item[] $e_{\bad}^*(l)=e_{\good}^*(l)=0$ if $l\notin(\underline{l},l_0]\cup[l_1,\overline{l}]$.
	\end{itemize}
\end{prop}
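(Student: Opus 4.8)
The plan is to follow the template of the proofs of Prop.~\ref{prop:poissonstrangeequil2} and Prop.~\ref{prop:poissonstrangeequil3}: posit the three-region strategy in the statement, compute the candidate value functions on the region where the efforts lie in $\set{0,1}$, and then verify the best-response conditions~(\ref{cornerBR2}) region by region. Condition~(f) is the $d>0$ analogue of Prop.~\ref{prop:poissonstrangeequil2}~(f): it sends the jump target $j(l)=l+\ln\frac{d}{\lambda+d}$ from the scrutiny region $[l_1,\overline{l}]$ (where $e_{\good}^*=1$, $e_{\bad}^*=0$) below $\underline{l}$, into the pooling region, so that $V_{\theta}(j(l))=\frac{\beta(j(l))}{r}$. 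Since $\overline{l}$ is finite here --- it must be when $d>0$, cf.\ Lemma~\ref{lem:scrubound} --- the boundary condition for the HJB ODEs~(\ref{HJB2}) on $[l_1,\overline{l}]$ is value matching, $\lim_{l\to\overline{l}}V_{\theta}(l)=\frac{\beta(\overline{l})}{r}$; substituting $e_{\good}^*=1$, $e_{\bad}^*=0$ into~(\ref{HJB2}) and solving the resulting ODEs under this condition produces exactly the closed forms~(\ref{bnewsodesols2}), which are strictly increasing on $[l_1,\overline{l}]$, so $\underline{V}^1_{\theta}=V_{\theta}(l_1)$ and $\overline{V}^1_{\theta}=\frac{\beta(\overline{l})}{r}$.

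With these in hand, conditions~(a) and~(b), applied via~(\ref{cornerBR2}) with $V_{\theta}(j(l))=\frac{\beta(l+\ln(d/(\lambda+d)))}{r}$, give $e_{\good}(l)=1$ and $e_{\bad}(l)=0$ throughout $[l_1,\overline{l}]$, so the scrutiny region is incentive compatible. For the early-career region I would mimic the proof of Prop.~\ref{prop:poissonstrangeequil3}: conjecture that $e_{\bad}^*(l)$ is large enough that each jump $j(l)$ (computed from~(\ref{jumpb}) with $e_{\good}^*(l)=0$) lands in $[l_1,\overline{l}]$; by~(f) only the endpoint $l_1$ binds, and the most demanding requirement, approached as $l\to\underline{l}$, is the effort $\hat{e}$ appearing in the argument of $c'_{\bad}$ in~(e), which is bounded below by a positive constant on any interval $(\underline{l},l_0]$ with $l_0<l_1$. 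Hence the drift $-\lambda e_{\bad}^*$ carries $l$ down to $\underline{l}$ in time $O(l-\underline{l})$ while signals arrive at the bounded rate $\lambda(1-e_{\bad}^*)+d$, and a squeeze argument analogous to~(\ref{Vbounds}) yields $\lim_{l\to\underline{l}+}V_{\theta}(l)=\frac{\beta(\underline{l})}{r}$. Condition~(c) and this limit then give $V_{\good}(l)-V_{\good}(j(l))\le\frac{c'_{\good}(0)}{\lambda}$, hence $e_{\good}(l)=0$, for $l$ near $\underline{l}$; conditions~(d),(e), the limit, and the closed-form values $\overline{V}^1_{\bad}=\frac{\beta(\overline{l})}{r}$, $\underline{V}^1_{\bad}=V_{\bad}(l_1)$ bracket $V_{\bad}(l)-V_{\bad}(j(l))$ strictly between $\frac{c'_{\bad}(1)}{\lambda}$ and $\frac{c'_{\bad}(\hat{e})}{\lambda}$ on such a neighbourhood, so --- since the left side of~(\ref{equilcond2}) is continuous and strictly decreasing in $e_{\bad}^*(l)$ while the right side is increasing --- the Mean Value Theorem produces for each $l$ in a suitable $(\underline{l},l_0]$ an interior solution $e_{\bad}^*(l)\in[\hat{e},1)$ of~(\ref{equilcond2}), which simultaneously defines $e_{\bad}^*(l)$ and verifies the conjecture. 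Outside $(\underline{l},l_0]\cup[l_1,\overline{l}]$ we have $e_{\good}^*=e_{\bad}^*$, so $l$ is frozen and $e_{\theta}=0$ is the unique best response, exactly as in the pooling argument. A verification theorem (Theorem~4.6 of \cite{presman+1990} in the discounted form of \cite{yushkevich1988}) then confirms that the constructed solutions of~(\ref{HJB2}) are the value functions.

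The hard part, as in Prop.~\ref{prop:poissonstrangeequil3}, will be the joint treatment of the early-career region: one must (i) pin down the boundary behaviour $\lim_{l\to\underline{l}+}V_{\theta}(l)=\frac{\beta(\underline{l})}{r}$ even though $V_{\theta}$ is not available in closed form there, which is exactly where the positive lower bound on $e_{\bad}^*$ is needed to keep the hitting time of $\underline{l}$ --- and hence the bounds~(\ref{Vbounds}) --- under control; and (ii) close the self-referential loop between the conjectured lower bound on $e_{\bad}^*$ (needed so that jumps land in $[l_1,\overline{l}]$ and the closed forms~(\ref{bnewsodesols2}) apply) and the effort actually delivered by~(\ref{equilcond2}). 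Relative to Prop.~\ref{prop:poissonstrangeequil2}, the only genuinely new ingredient is the finite $\overline{l}$ with its value-matching condition, replacing the $\overline{l}\to\infty$ limits of~(\ref{bnewsodesols2}); everything else is the same region-by-region bookkeeping.
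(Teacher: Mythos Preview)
Your proposal is correct and follows precisely the approach the paper intends: the paper in fact omits the proof, stating only that it is ``similar to that of Prop.~\ref{prop:poissonstrangeequil3}: value functions or bounds on them are substituted into the conditions of Prop.~\ref{prop:poissonstrangeequil2} to check incentives,'' and you have accurately reconstructed that argument region by region, including the finite-$\overline{l}$ value-matching condition, the squeeze at $\underline{l}$, and the Mean Value Theorem step for $e_{\bad}^*$.
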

The proof is omitted, because it is similar to that of Prop.~\ref{prop:poissonstrangeequil3}: value functions or bounds on them are substituted into the conditions of Prop.~\ref{prop:poissonstrangeequil2} to check incentives. 
Condition (d) in Prop.~\ref{prop:poissonstrangeequil4} always holds, but is added for better comparability to the other propositions. 

\begin{proof}[Proof of Prop.~\ref{prop:dtheta}.]
	The equilibrium with jump rate $\lambda(1-e_{\theta})+d_{\theta}$ will be constructed using the same $\underline{l},l_1,\overline{l}$ as in Prop.~\ref{prop:poissonstrangeequil4}. 
	If the Poisson rate is $\lambda(1-e_{\theta})+d_{\theta}$ for type $\theta$, then~(\ref{jumpb}) becomes
	\begin{align}
	\label{jumpdtheta}
	j(l)=l+\ln\left(\frac{\lambda(1-e_{\good}^*(l))+d_{\good}}{\lambda(1-e_{\bad}^*(l))+d_{\bad}}\right).
	\end{align}
	In the absence of jumps, the log likelihood ratio drift is $\lambda[e_{\good}^*(l)-e_{\bad}^*(l)]-d_{\good}+d_{\bad}$. 
	Jumps reach $l_1$ from $l$ in the $e_{\bad}^*(l)>e_{\good}^*(l)$ region iff $e_{\bad}^*(l)\geq 1-\frac{d_{\bad}}{\lambda}-\frac{\lambda+d_{\good}}{\lambda}\exp(l-l_1 )$, which is implied by Prop.~\ref{prop:poissonstrangeequil4}~(f) holding strictly and by $|d_{\theta}-d|<\delta$. 
	
	If $d_{\good}<d_{\bad}$, then $\underline{l}$ is a \emph{stasis point}: to the left of $\underline{l}$ and at $\underline{l}$, the drift of the log likelihood ratio is positive and to the right of $\underline{l}$, negative. After reaching $\underline{l}$, the $l$ process oscillates around it with infinite frequency and zero amplitude, spending fraction $w:=\frac{d_{\good}-d_{\bad}+\lambda \lim_{l\rightarrow \underline{l}+}e_{\bad}^*(l)}{\lambda \lim_{l\rightarrow \underline{l}+}e_{\bad}^*(l)}$ of any time interval infinitesimally to the left of $\underline{l}$ and $1-w$ infinitesimally to the right. 
	The drift at $\underline{l}$ is a mixture of drifts to the left and to the right of it, with weight $w$ on the left-side drift. A fraction $w$ of the time, jumps to $j_{-}(\underline{l}):=\underline{l}+\ln\left(\frac{\lambda+d_{\good}}{\lambda+d_{\bad}}\right)$ occur for type $\theta$ at rate $\lambda+d_{\theta}$. A fraction $1-w$ of the time, jumps to $j_{+}(\underline{l}):=\underline{l}+\ln\left(\frac{\lambda+d_{\good}}{\lambda(1-\lim_{l\rightarrow \underline{l}+}e_{\bad}^*(l))+d_{\bad}}\right)>l_1$ occur for $\good$ at rate $\lambda+d_{\theta}$ and for $\bad$ at rate $\lambda_{\bad}^*:=\lambda(1-\lim_{l\rightarrow \underline{l}+}e_{\bad}(l))+d_{\bad}$. 
	The values of the types at $\underline{l}$ are 
	\begin{align}
	\label{Vtypedep}
	&V_{\good}(\underline{l})= \frac{\beta(\underline{l})}{r+\lambda+d_{\good}} +\frac{\lambda+d_{\good}}{r+\lambda+d_{\good}}\left[wV_{\good}\left(j_{-}(\underline{l})\right) +(1-w)V_{\good}\left(j_{+}(\underline{l})\right)\right], \notag
	\\& V_{\bad}(\underline{l})= \frac{\beta(\underline{l})}{(\lambda+d_{\bad})w +\lambda_{\bad}^*(1-w)} +\frac{(\lambda+d_{\bad})w}{(\lambda+d_{\bad})w +\lambda_{\bad}^*(1-w)} V_{\bad}\left(j_{-}(\underline{l})\right)  \\&+\frac{\lambda_{\bad}^*(1-w)}{(\lambda+d_{\bad})w +\lambda_{\bad}^*(1-w)}V_{\bad}\left(j_{+}(\underline{l})\right). \notag
	\end{align}
	Value is continuous at $\underline{l}$ by the same argument as in the proof of Prop.~\ref{prop:poissonstrangeequil3}. 
	For any $\epsilon>0$ there exists $\delta>0$ s.t.\ if $|d_{\theta}-d|<\delta$, then $w>1-\epsilon$ and $|\underline{l}-j_{-}(\underline{l})|<\epsilon$. Then by~(\ref{Vtypedep}) and the continuity of $V_{\theta}$,
	\begin{align*}
	&\frac{\beta(\underline{l})}{r+\lambda+d_{\good}} +\frac{\lambda+d_{\good}}{r+\lambda+d_{\good}}\epsilon \frac{\beta_{\max}}{r}
	\geq V_{\good}(\underline{l})-\frac{(\lambda+d_{\good})(1-\epsilon)}{r+\lambda+d_{\good}}[V_{\good}\left(j_{-}(\underline{l})\right)+\epsilon],
	\\&V_{\good}(\underline{l})-\frac{(\lambda+d_{\good})(1-\epsilon)}{r+\lambda+d_{\good}}[V_{\good}\left(j_{-}(\underline{l})\right)-\epsilon]\geq  \frac{\beta(\underline{l})}{r+\lambda+d_{\good}} +\frac{\lambda+d_{\good}}{r+\lambda+d_{\good}}\epsilon \frac{\beta_{\min}}{r}, 
	\end{align*}
	so by the Squeeze Theorem, $\lim_{d_{\good},d_{\bad}\rightarrow d}V_{\good}(\underline{l})=\frac{\beta(\underline{l})}{r}$. 
	A similar argument shows $\lim_{d_{\good},d_{\bad}\rightarrow d}V_{\bad}(\underline{l})=\frac{\beta(\underline{l})}{r}$. 
	
	At any $l$ s.t.\ $e_{\bad}^*(l)=e_{\good}^*(l)=0$, for any $T,\epsilon>0$ there exists $\delta>0$ s.t.\ if $|d_{\theta}-d|<\delta$, then with probability $1-\epsilon$, the flow payoff remains within $\epsilon$ of $\beta(l)$ for at least $T$ units of time. This is because by~(\ref{jumpdtheta}), $\lim_{d_{\theta}\rightarrow d}j(l)=l$, and when $e_{\bad}^*(l)=e_{\good}^*(l)$, the drift of the log likelihood ratio is $-d_{\good}+d_{\bad}$. Therefore $\lim_{d_{\good},d_{\bad}\rightarrow d}V_{\theta}(l)=\frac{\beta(l)}{r}$. 
	
	Using (\ref{jumpdtheta}) and the drift $\lambda[e_{\good}^*(l)-e_{\bad}^*(l)]-d_{\good}+d_{\bad}$ in the HJB equation~(\ref{HJB2}) and solving for $V_{\theta}$ in the $e_{\bad}^*(l)=0$, $e_{\good}^*(l)=1$ region, (\ref{bnewsodesols2}) becomes
	\begin{align}
	\label{dthetaodesols}
	V_{\good}(l)&=\exp\left(-\frac{(r+d_{\good})(\overline{l}-l)}{\lambda -d_{\good}+d_{\bad}}\right)V_{\good}(\overline{l}) \notag
	\\&+ \int_{l}^{\overline{l}}\frac{\beta(z)-c_{\good}(1)+d_{\good}V_{\good}(j(z))}{\lambda -d_{\good}+d_{\bad}}\exp\left(-\frac{(r+d_{\good})(z-l)}{\lambda -d_{\good}+d_{\bad}}\right)dz,
	\\ V_{\bad}(l)&=\exp\left(-\frac{(r+\lambda+d_{\bad})(\overline{l}-l)}{\lambda -d_{\good}+d_{\bad}}\right)V_{\bad}(\overline{l}) \notag
	\\& +\int_{l}^{\overline{l}}\frac{\beta(z)+(d_{\bad}+\lambda)V_{\bad}(j(z))}{\lambda -d_{\good}+d_{\bad}}\exp\left(-\frac{(r+\lambda+d_{\bad})(z-l)}{\lambda -d_{\good}+d_{\bad}}\right)dz.\notag
	\end{align}
	From $e_{\bad}^*(\overline{l})=e_{\good}^*(\overline{l})=0$, it follows that $\lim_{d_{\good},d_{\bad}\rightarrow d}V_{\theta}(\overline{l})=\frac{\beta(\overline{l})}{r}$ in (\ref{dthetaodesols}). From $e_{\bad}^*(j(l))=e_{\good}^*(j(l))=0$, it follows that $\lim_{d_{\good},d_{\bad}\rightarrow d}V_{\theta}(j(l))=\frac{\beta(j(l))}{r}$ in (\ref{dthetaodesols}). Based on these, the limit of (\ref{dthetaodesols}) as $d_{\theta}\rightarrow d$ is (\ref{bnewsodesols2}). 
	
	Since Prop.~\ref{prop:poissonstrangeequil4} (a)--(f) hold strictly and $V_{\theta}(l)$ is continuous in $d_{\good},d_{\bad}$ for any $l$, there exist $l_0\in(\underline{l},l_1)$ and $\delta>0$ s.t.\ if $|d_{\theta}-d|<\delta$, then Prop.~\ref{prop:poissonstrangeequil2} (a)--(f) are satisfied. This suffices for the existence of a switched effort equilibrium. 
\end{proof}

\begin{proof}[Proof of Lemma~\ref{lem:scrubound}.]
	If $\set{l:e_{\bad}^*(l)=0,\;e_{\good}^*(l)=1}=\emptyset$, then the result is vacuously true. Suppose there exists $l$ s.t.\ $e_{\bad}^*(l)=0$, $e_{\good}^*(l)=1$.
	Define $j^n(l):=j(j^{n-1}(l))$ and $j^1(l):=j(l)=l+\ln\left(\frac{d}{\lambda+d}\right)$. Clearly $j(l)-l\in(-\infty,0)$. 
	
	Define $n^*:=\frac{\lambda(\beta_{\max}-\beta_{\min})}{rc'_{\good}(1)}$.  By~(\ref{cornerBR2}), $e_{\good}^*(l)=1$ requires 
	$\lambda\left[V_{\theta}(l) -V_{\theta}\left(j(l)\right)\right]\geq c'_{\theta}(1).$ Since $V_{\theta}(l)\in\left[\frac{\beta_{\min}}{r},\frac{\beta_{\max}}{r}\right]$, we have 
	\begin{align}
	\label{boundedscru}
	&\sup\set{n\in\mathbb{N}:\exists l\text{ s.t. } l,j^1(l),\ldots,j^n(l)\in \set{\hat{l}:e_{\bad}^*(\hat{l})=0,\;e_{\good}^*(\hat{l})=1}}
	\leq n^*.
	\end{align} 
	Pick $l_1<\overline{l}$ s.t.\ 
	for any $\hat{l}\in(l_1,\overline{l})$, $e_{\bad}^*(\hat{l})=0$, $e_{\good}^*(\hat{l})=1$. By~(\ref{boundedscru}), $\overline{l}-l_1\leq n^*/\ln\left(\frac{d}{\lambda+d}\right)<\infty$. 
	Raise $\overline{l}$ maximally, so for any $\eta>0$, $e_{\bad}^*(\overline{l}+\eta)=e_{\good}^*(\overline{l}+\eta)=0$. In this case, by continuity of the $V_{\theta}$ given in~(\ref{bnewsodesols2}), $\lim_{l\rightarrow \overline{l}}V_{\theta}(l)=\frac{\beta(\overline{l})}{r}$. 
	
	By definition of $n^*$, if $e_{\bad}^*(l)=0$ and $e_{\good}^*(l)=1$, then there exists $\hat{n}\in\mathbb{N}$, $\hat{n}<n^*+1$, s.t.\ $e_{\bad}^*(j^{\hat{n}}(l))=e_{\good}^*(j^{\hat{n}}(l))=0$. Let $\underline{n}$ be the minimal such $\hat{n}$ and let $l^*=\overline{l}-\nu$ for $\nu>0$ small enough. 
	
	Due to the bounded $\beta$ and $\beta'>0$, there exist $N,\epsilon>0$ s.t.\ if $|l|>N$, then $\beta(l)-\beta\left(j^{\underline{n}}(l)\right)<r\epsilon$. Take $\epsilon=\frac{c_{\good}'(1)}{\lambda}$. Then there exists $N>0$ s.t.\ if $\overline{l}>N$, then $V_{\theta}(l^*)-V_{\theta}(j^{\underline{n}}(l^*))<\frac{c_{\good}'(1)}{\lambda}$. Since $V_{\theta}$ is strictly increasing in the $e_{\bad}^*(l)=0$, $e_{\good}^*(l)=1$ region, this implies the failure of~(\ref{cornerBR2}) at all $l=l^*,j^1(l^*),\ldots,j^{\underline{n}-1}(l^*)$. Thus there is no incentive for $\good$ to exert maximal effort. This shows $\overline{l}\leq N$ for all intervals $(l_1,\overline{l})$ on which switched efforts occur. 
	
	To show $l_1\geq-N$, note that $V_{\theta}(l)<V_{\theta}(\overline{l})=\frac{\beta(\overline{l})}{r}$ for any $l\in(l_1,\overline{l})$. If $V_{\theta}(\overline{l})-\frac{\beta_{\min}}{r}<\frac{c_{\good}'(1)}{\lambda}$, then~(\ref{cornerBR2}) fails at any $l\in(l_1,\overline{l})$. It was shown above that $\overline{l}-l_1\leq n^*/\ln\left(\frac{d}{\lambda+d}\right)$, so for $\good$ to have an incentive to exert maximal effort, it is necessary that $l_1\geq \beta^{-1}\left(\frac{rc_{\good}'(1)}{\lambda}+\beta_{\min}\right)-n^*/\ln\left(\frac{d}{\lambda+d}\right)$.
\end{proof}

\bibliographystyle{ecta}
\bibliography{teooriaPaberid} 
\end{document}